\newcommand{\mc}[1]{\mathcal{#1}}
\newcommand{\te}[1]{\mathrm{#1}}
\newcommand{\field}[1]{{\mathbb{#1}}}
\newcommand{\be}{\begin{equation}}
\newcommand{\ee}{\end{equation}}
\newcommand{\bea}{\begin{eqnarray}}
\newcommand{\eea}{\end{eqnarray}}
\newcommand{\ba}{\begin{array}}
\newcommand{\ea}{\end{array}}
\newcommand{\beas}{\begin{eqnarray*}}
\newcommand{\eeas}{\end{eqnarray*}}
\newcommand{\leftm}{\left[\begin{array}}
\newcommand{\rightm}{\end{array}\right]}
\newcommand{\bs}{\setminus}
\newtheorem{thm}{\bf Theorem}[section]
\newtheorem{prop}{\bf Proposition}
\newtheorem{definition}{\it Definition}[section]
\DeclareMathOperator*{\argmax}{\arg\!\max}
\title{Maximizing User Engagement in Social Networks: A Game-Theoretic Approach to Network Participation and Resource Sharing}
\author{Ahmed Luqman and  Hassan Jaleel
\thanks{A. Luqman and H. Jaleel are with the Intelligent Machines \& Sociotechnical Systems (iMaSS) Lab, Department of Electrical Engineering, Syed Babar Ali School of Science \& Engineering at LUMS, Lahore, Pakistan. Emails: \texttt{hassan.jaleel@lums.edu.pk, 24100041@lums.edu.pk}}}
\begin{document}
\maketitle
\begin{abstract}
We propose a game-theoretic framework to model and optimize user engagement in cooperative activities over social networks. While traditional diffusion models suggest that individuals are only influenced by their neighbors, empirical evidence shows that diffusion alone does not fully explain network evolution, and non-diffusion factors play a significant role in network growth. We model network participation and resource-sharing as strategic games involving boundedly rational players to address this gap between the analytical models and empirical evidence. Specifically, we employ Log-Linear Learning (LLL), a version of noisy best response, to capture players' decision-making strategies. By incorporating stochastic decision models like LLL, our framework integrates both diffusion and non-diffusion dynamics into network evolution dynamics. Through equilibrium analysis and simulations, we demonstrate that our model aligns with theoretical predictions from existing analytical frameworks and empirical observations across various initial network configurations. Our second contribution is a novel method for selecting anchor nodes to enhance user participation. This approach allows system designers to identify anchor nodes and compute their incentives in real time under a more realistic information requirement constraints as compared to the existing approaches. The proposed approach adapts to changing network conditions by reallocating resources from less impactful to more influential nodes.  Furthermore, the method is resilient to anchor node failures, ensuring sustained and continuous network participation.

\end{abstract}

\section{Introduction}
%

%
Researchers have long been interested in understanding the formation and evolution of communities within social networks (\cite{bala2000noncooperative}, \cite{granovetter1978threshold},  \cite{goldenberg2001using}, \cite{rand2013human}, \cite{simpson2015beyond}, \cite{skyrms2009dynamic},  and \cite{leskovec2008microscopic}). In particular, what factors drive individuals to participate in any network activity (\cite{granovetter1978threshold}, \cite{goldenberg2001talk}, and \cite{burke2009feed})? What are the structural properties of the underlying social networks that influence individual decisions at the microscopic level \cite{leskovec2008microscopic} and the resulting network properties at the macroscopic level \cite{leskovec2005graphs}? How many individuals will eventually participate in a network activity given an underlying social network? While such problems have been a focus of research for a long time, the emergence of online social networks has introduced new dimensions to this research with the availability of large data sets of network formation (see \cite{borgatti2018analyzing}, \cite{jackson2008social}, \cite{lattanzi2009affiliation}, and \cite{backstrom2006group} for online social networks). 

We present a framework for modeling and analyzing the evolution of groups and communities in social networks by formulating interactions among neighboring users as a non-cooperative game. We explore two motivating scenarios. The first is a \emph{network participation problem}, where individuals in a social network must decide whether to engage in network activity or drop out. Participation provides certain benefits but also incurs costs, which could be financial (e.g., subscription fees) or non-financial, such as time commitment. The second scenario is a \emph{network resource-sharing problem}, where each user needs access to a specific set of resources but only possesses a smaller subset. However, the remaining resources may be available through their neighbors in the social network. As a result, individuals must decide whether to share their resources with neighbors in exchange for access to the resources their neighbors possess.

Modeling and analyzing the evolution of user engagement in network activities across social networks has been an active area of research (see, for instance, \cite{granovetter1978threshold}, \cite{goldenberg2001talk}, \cite{wu2013arrival}, \cite{backstrom2006group},  \cite{burke2009feed}, \cite{bhawalkar2015preventing},  \cite{malliaros2013stay}, \cite{garcia2013social}, and \cite{kempe2005influential}). A widely adopted model for user engagement in these frameworks is the \emph{threshold model} from \cite{granovetter1978threshold}, which posits that an individual's decision to participate in or withdraw from a network activity is influenced by the number of their friends participating in the activity. Specifically, individuals join an activity if the number of their participating friends exceeds a certain threshold. A common assumption in this model is to set a uniform threshold value $k$ for all users, leading to the prediction that the $k$-core of the social network will represent the subgraph of participating players.

The threshold model predicts that user participation diffuses through the network, starting from an initial group of participants. The same is true for the \emph{independent cascade model} in \cite{goldenberg2001using} and \cite{goldenberg2001talk}, which is based on the theory of interacting particle systems \cite{liggett1985interacting}. In this probabilistic model, active users try to convince their neighbors to join. An issue with diffusion-based models is that multiple empirical studies on online social networks have revealed that the evolution of user participation also involves non-diffusion components \cite{backstrom2006group, kairam2012life, malliaros2013stay, garcia2013social}. Non-diffusion growth occurs when isolated users without prior friends join randomly across the network. As highlighted in \cite{kairam2012life}, non-diffusion growth is crucial for understanding the steady state of the engagement network since pure diffusion often limits network evolution. Thus, a limitation of the threshold  

Another issue with diffusion-based models, such as the threshold model, is a limitation in their analysis approach. The steady-state analysis of these models often begins with the assumption that all users are initially engaged \cite{bhawalkar2015preventing}.  Then, players with fewer than $k$ neighbors withdraw from the network, triggering a cascade of withdrawals that eventually stabilizes at the $k$-core. However, assuming that all players are initially engaged in a network activity is both restrictive and unrealistic, raising concerns about the $k$-core’s validity as a steady-state predictor for active participants. A similar approach was presented in \cite{abbas2019graph} for the resource-sharing problem, where users with a subset of $s$ resources would opt out if they did not have access to all $r$ resources through their neighbors. This disengagement would trigger a chain reaction of withdrawals that would ultimately converge to the $(r, s)$-core.


To address these limitations, {we present a modeling framework that integrates both diffusion and non-diffusion dynamics to capture the evolution of user engagement in a network activity. Additionally, we introduce a novel approach for the steady-state analysis of the engagement network that is independent of the initial conditions}. Our framework employs a strategic game formulation for network participation and resource-sharing problems. We consider players with bounded rationality and propose Log-Linear Learning (LLL) for modeling players' decision strategies. Log-linear learning (LLL), as presented in \cite{Blume93}, \cite{Marden12}, and \cite{alos10}, is a variant of noisy best response dynamics where players predominantly exploit the best response but occasionally explore other actions. These exploration and exploitation behaviors lead to diffusion and non-diffusion network growth, respectively, which justifies our choice of LLL for modeling user decision strategy. We also present a stochastic stability analysis for the network participation game using the Radius-CoRadius approach from \cite{alos10}.


Our second contribution is a novel framework for maximizing user participation in network activities. We introduce a system designer as a Principal-Agent (PA) who can incentivize individuals to join the network. While the concept of incentivizing influential users, known as anchor nodes, has been extensively studied (e.g., \cite{bhawalkar2015preventing}, \cite{kempe2005influential}, \cite{abbas2019graph}, \cite{medya2019k}, \cite{zhang2017finding}, \cite{linghu2020global}), existing approaches typically rely on greedy heuristics to identify a set of anchor nodes before the network activity starts. Once identified, these players permanently serve as anchor nodes and receive incentives. However, this approach assumes that the Principal-Agent (PA) has complete knowledge of the underlying social network structure, which can be a restrictive assumption. Additionally, the method lacks resilience to anchor node failures, as even a small number of failures would require the PA to recompute the entire anchor set. To address these limitations, our framework continuously reassesses the network configuration at each decision interval for selecting anchor nodes in real-time. The proposed method only requires information about players who have already committed to network participation, which is a more realistic assumption for the PA. Furthermore, by dynamically selecting anchor nodes based on their current potential to contribute, our approach enhances resilience against anchor node failures.



We presented a preliminary version of these ideas in \cite{luqman2024dynamic}, where we introduced the strategic game formulation for network participation and resource-sharing games. In this paper, we have significantly extended the analysis by incorporating stochastic stability analysis using the Radius-CoRadius approach. Additionally, the problem setup for maximizing user participation through anchor node selection is entirely new.

Here is an outline of the paper. Section \ref{sec:Setup} introduces the notations and provides the necessary background in game theory. Section \ref{sec:Network Engagement games} formulates the network participation and resource-sharing problems using game theory, including a Nash Equilibrium analysis and simulation results on Erdős-Rényi random networks. Section \ref{sec:Stochastic Stability} presents the stochastic stability analysis for the network participation game. Section \ref{sec:InfluenceMaximization} details the game setup for maximizing user participation through anchor nodes. Finally, the paper concludes in Section \ref{sec:conclusion}.

\section{Setup} \label{sec:Setup}
We consider a social network of $n$ players represented by a graph $\mc{G}(\mc{V},\mc{E})$, where $\mc{V} = \{1,2,\ldots,n\}$ is the set of players and $\mc{E}$ is the set of edges. An edge $(i,j) \in \mc{E}$ indicates that players $i$ and $j$ can interact and mutually benefit. Let $\mc{N}_i$ be the neighborhood set of player $i$, i.e., 
\[
\mc{N}_i = \{j \in \mc{V} ~|~ (i,j) \in \mc{E} \}, 
\]
and $\mc{N}[i] = \mc{N}_i \cup \{i\}$ be the closed neighborhood of $i$. 
 Each player has a set of actions $\mc{A}_i$, and $\mc{A} = \mc{A}_1\times \mc{A}_2\cdots\times \mc{A}_n$ is the set of joint action profiles. Every element $\sigma$ in $\mc{A}$ is an $n$-tuple $(\sigma_1,\sigma_2,\ldots,\sigma_n) = (\sigma_i,\sigma_{-i})$, where $\sigma_i \in \mc{A}_i$ represents player $i$'s action and $\sigma_{-i}$ is an $(n-1)$-tuple that represents the actions of all the players other than $i$. Similarly, for any set $\mc{S} \subset \mc{V}$, $\sigma = (\sigma_{\mc{S}}, \sigma_{\mc{V} \bs \mc{S}})$ is a decomposition of action profile between players in the sets $\mc{S}$ and $\mc{V} \bs \mc{S}$, respectively. 

Players' preferences over the set of joint action profiles $\mc{A}$ are defined by their utility functions, $U_i~: \mc{A} \rightarrow \field{R}$. The set of best responses of player $i$ to  $\sigma_{-i}$ is 
\[
B_i(\sigma_{-i}) = \{\sigma^*_i \in \mc{A}_i ~|~ U_i(\sigma^*_i,\sigma_{-i}) \geq U_i(\sigma_i,\sigma_{-i}) ~\forall ~ \sigma_i \in \mc{A}_i\}.
\]
Let $d_H(\sigma,\sigma') = |\{j \in \mc{V} ~|~ \sigma_j \neq \sigma_j'\}|$ be the number of players with different actions in profiles $\sigma$ and $\sigma'$. A sequence of joint action profiles $\mc{P} = (\sigma^0,\sigma^1,\cdots,\sigma^{l-1})$ is a path if $\sigma^p \in \mc{A}$ for all $p \in \{0,1,\cdots,l-1\}$ and $d_H(\sigma^p,\sigma^{p+1}) = 1$ for each consecutive pair of profiles.  We define 
\[
\mc{P}_{\te{ind}}(\sigma^p,\sigma^{p+1}) = \{i \in \mc{V}~|~\sigma^p_{i} \neq \sigma^{p+1}_i\},
\]
i.e., $\mc{P}_{\te{ind}}(\sigma^p,\sigma^{p+1})$ is the index of the player updating their action from $\sigma^p$ to $\sigma^{p+1}$. The path $\mc{P}$ is a best response path if $\sigma^{p+1}_i$ belongs to $B_i(\sigma^p_{-i})$, where $i = \mc{P}_{\te{ind}}(\sigma^p,\sigma^{p+1}) $. 
Thus, at each step of the best response path, only one player's action is updated, and the updated action is the best response to the actions of other players.

In the theory of learning in games, various decision rules have been proposed and analyzed for modeling players' decision strategies (see, for instance, \cite{young1993evolution}, \cite{kandori_1993_learning}, \cite{Kandori1993}, \cite{ellison_2000_basins}, and \cite{newton2018evolutionary}). In this work, we adopt Log-Linear Learning (LLL), which models player decisions using noisy best response  (see \cite{Blume93}, \cite{Marden12}, and \cite{alos10} for details on LLL). Let $\sigma(t-1) = (\sigma_i(t-1),\sigma_{-i}(t-1))$ be the action profile at time $t-1$. Then, the steps involved in LLL at decision time $t$ are as follows:
\begin{itemize}
    \item A player, say player $i$, is randomly selected from the set of players $\mc{V}$.  
    \item The remaining players repeat their actions from iteration $t-1$, i.e., $\sigma_{-i}(t) = \sigma_{-i}(t-1)$.
    \item Player $i$ selects an action $\sigma_i' \in \mc{A}_i$ with the probability 
\begin{equation}\label{eq:pi_LLL}
\begin{aligned}
p^{\text{LLL}}_i(\sigma_i',\sigma_{-i}(t)) &= \frac{e^{-\frac{1}{T}[U_i(\sigma_i^*,\sigma_{-i}(t)) - U_i(\sigma_i',\sigma_{-i}(t)) ]}}{Z_i(\sigma_{-i}(t))}, \\
Z_i(\sigma_{-i}(t)) &= \sum\limits_{\bar{\sigma}_i \in \mc{A}_i} e^{-\frac{1}{T}[U_i(\sigma_i^*, {\sigma}_{-i}(t)) - U_i(\bar{\sigma}_i,\sigma_{-i}(t)) ]},
\end{aligned}
\end{equation}
where $Z_i(\sigma_{-i}(t))$ is the normalizing factor and $\sigma^*_i$ is an action from the best response set $B_i(\sigma_{-i}(t))$. 
\end{itemize}

The parameter $T$ in (\ref{eq:pi_LLL}), commonly referred to as the temperature or noise parameter, controls the level of exploration in decision-making. When $T = 0$, players do not explore and always select the best response to the actions of others. Increasing $T$ enhances exploratory behavior, allowing players to consider and select suboptimal actions. This exploration is crucial in complex optimization problems to avoid being trapped in local minima or maxima. As $T \rightarrow \infty$, players’ decisions approach a uniform distribution over their action sets, effectively resulting in random choices. Therefore, selecting an appropriate value for 
$T$ is essential to strike a balance between exploration and exploitation.

Since the decision of the randomly selected player at time $t$ only depends on the actions of others at time $t-1$, LLL induces a Markov chain over the set of joint action profiles $\mc{A}$. The induced Markov chain is ergodic and reversible with a unique stationary distribution $\mu^{\te{LLL}}_T$. An action profile $\sigma$ is \emph{stochastically stable} if and only if $\lim_{T \rightarrow 0} \mu_T^{\te{LLL}}(\sigma)>0$. The action profiles that are not stochastically stable have vanishingly small probabilities in the limiting case of $T \rightarrow 0$. Thus, stochastically stable profiles are good predictions for the long-run behavior of the system. 

A game is an \emph{exact potential game} if there exists a function $\Phi: \mc{A}\rightarrow \field{R}$ such that for any player $i \in \mc{V}$,
\begin{equation}\label{eq:PotentiaGame}
U_i(\sigma_i,\sigma_{-i}) - U_i(\sigma'_i,\sigma_{-i}) =\Phi(\sigma_i,\sigma_{-i}) - \Phi(\sigma'_i,\sigma_{-i}). 
\end{equation}
If this condition is satisfied, $\Phi$ is a potential function of the game. Thus, in a potential game, unilateral changes in any agent's utility quantitatively equals changes in a
potential function. For potential games under LLL, the set of stochastically stable profiles is equal to the set of action profiles that maximize a potential function of the game (\cite{Blume93} and \cite{alos10}). 

In \cite{voorneveld2000best}, a generalization of the potential game called the best-response potential game was presented. A game is a \emph{best-response potential game} if there exists a potential function $\Phi: \mc{A} \rightarrow \field{R}$, such that
\begin{equation}\label{eq:BR_Potential}
    \argmax_{\sigma_i \in \mc{A}_i } U_i(\sigma_i,\sigma_{-i}) = \argmax_{\sigma_i \in \mc{A}_i } \Phi(\sigma_i,\sigma_{-i}).
\end{equation}
Best-response potential games satisfy the property that a game with finite players and finite action sets is a best-response potential game if and only if any best-response path does not contain a cycle (Thm 3.2 of \cite{voorneveld2000best}). Moreover, according to Thm. 2 in \cite{alos10}, in a best-response potential games, if all the players adhere to LLL, then the set of stochastically stable action profiles is contained in the set of Nash equilibria. 
\section{Network Engagement Games} \label{sec:Network Engagement games}

We present a game-theoretic formulation of network engagement, where individuals must choose to either participate in a collaborative activity within their social network or refrain from doing so.
%
In particular, we consider the following two network engagement scenarios:
\begin{itemize}
    \item \emph{Network Participation Problem}: In this scenario, players decide whether to participate in a network activity. Their decision is influenced by the participation of their friends (or neighbors in a network graph). Specifically, a player will join the activity if a minimum number of their friends are already participating. 
    \item \emph{Resource-Sharing Problem}: In this case, players need access to resources that they may not possess entirely. Each player has a subset of resources, and cooperation is required for them to access resources held by others in the network. However, access is reciprocal: for a player to gain access to their neighbors' resources, they must also agree to share their own resources.
\end{itemize}

A key research question in network engagement games is how players decide to join or leave a cooperative activity or community. The threshold-based model, such as that proposed by \cite{granovetter1978threshold}, is prominent among diffusion-based models, where a user’s benefit from participation is a monotone function of the number of friends already engaged. However, empirical studies of online social networks reveal that diffusion alone does not fully account for network engagement evolution; players' individual behavioral characteristics also significantly influence their decisions. We address this discrepancy between the theoretical and empirical models by formulating network engagement as a non-cooperative game. In particular, we model players as boundedly rational decision-makers, allowing us to capture their unique decision behaviors, which correspond to non-diffusion network evolution.

\subsection{Network Participation Game }
To formulate the network participation problem as a game, we have a finite set of players $\mc{V} = \{1,2,\ldots,n\}$. Players has two actions in their action sets $A_i = \{1,0\}$, where the actions $1$ and $0$ represent participating or not participating in the activity. Given an action profile $\sigma$ in $\mc{A}$, we define 
\[
V_{\sigma} = \{j \in \mc{V} ~|~ \sigma_j = 1\}, 
\]
as the set of players participating in the network in action profile $\sigma$. We represent an action profile in which $p$ players participate as  $\sigma^{(p)}$. From a player's perspective, we define
\[
\mc{N}^p_i(\sigma) = \{j \in \mc{N}_i~|~\sigma_j=1\}.
\]
 where $\mc{N}^p_i(\sigma)$ is the set of neighbors of $i$ that are already participating in the network. Additionally, we use $\mc{N}^p_i[\sigma] =\mc{N}^p_i(\sigma) \cup \{i\}$ to represent the closed neighborhood of the participating players of  $i$. 
 
The first step in formulating the network participation problem as a non-cooperative game is to \emph{propose a utility function that aligns with empirical evidence and experimental observations of players' participation decisions while also enabling a comprehensive analysis of the game.}  The utility function of a player that we propose in this work is 
\begin{equation}\label{eq:Util-k}
U^{\te{NPG}}_i(\sigma_i,\sigma_{-i}) = 
\left\lbrace
\begin{array}{ll}
 \frac{\left(|\mc{N}^p_i(\sigma)| - k\right)}{|\mc{N}_i|}    &  \sigma_i = 1 \text{ and }|\mc{N}^p_i(\sigma)| < k, \\
 \alpha    & \sigma_i = 1 \text{ and }|\mc{N}^p_i(\sigma)| \geq k, \\
 0 & \sigma_i = 0.
\end{array}
\right.
\end{equation}
In the above expression, $\te{NPG}$ stands for Network Participation Game and $\alpha$ is a positive constant. Thus, player $i$ receives zero utility for not participating in the cooperative activity. The player's utility is negative for participating in the network if $|\mc{N}^p_i(\sigma)| <k$ and equals $\alpha$ if the number of participating neighbors of  $i$, $|\mc{N}^p_i(\sigma)|$, is greater than the threshold value $k$. This utility function captures the idea that players are motivated to participate if their neighbors' participation exceeds a certain threshold $k$, while participation without sufficient neighbor support is penalized. The utility function in (\ref{eq:Util-k}) aligns with the empirical findings in \cite{backstrom2006group} and \cite{wu2013arrival} that the players are more likely to join as the number of active neighbors increases, but the marginal impact of additional active neighbors diminishes. In the proposed utility function, a player's utility initially increases linearly with the number of participating neighbors. However, once the threshold $k$ is reached, the player receives a fixed benefit of $\alpha$, which remains constant regardless of further participation from additional neighbors.

The utility function is also normalized by $|\mc{N}_i|$, which effectively updates the threshold value or cost for participation from $k$ to $k/|\mc{N}_i|$. An impact of this normalization is that more influential players with large neighborhoods will be more willing to explore because their probability of selecting suboptimal actions will be higher than those with a few neighbors \cite{malliaros2013stay} and \cite{wu2013arrival}. Thus, well-connected players in the network will likely join earlier than players with few neighbors. 

\subsection{Network Resource Sharing Game}\label{sec:NSG}
The second setup of network engagement we consider is a generalization of the network participation game as presented in \cite{abbas2019graph}. In this generalized setup, each player possesses a set of personal resources, which can include physical sensors, information, or expertise for performing specific tasks. Players can share their resources with immediate neighbors in the network. Let $\mc{R} = \{0,1,\ldots,r-1\}$ denote the set of available resources in the network. Each player is assigned a subset of $s$ resources, where $s \leq r$. This setup can be represented as a labeled graph in which the label of a node corresponds to the set of resources allocated to that player.

Given an action profile $\sigma = (\sigma_i, \sigma_{-i})$ in $\mc{A}$,  we define
\begin{equation}\label{eq:L}
    L^p_i(\sigma) =  \bigcup_{j \in \mc{N}^p_i[\sigma]} l(j).
\end{equation}
 Here, $\mc{N}^p_i[\sigma]$ is the closed neighborhood of $i$, which includes $i$ and all the participating neighbors of $i$ in profile $\sigma_{-i}$. Thus, $L^p_i(\sigma)$ is the set of resources that $i$ can access either directly or through its participating neighbors. Moreover, we define 
 $$L_i = \cup_{j \in \mc{N}[i]} l(j),$$ 
 which is the set of resources available in the closed neighborhood of $i$. 
 
To propose a utility function for the resource-sharing setup, we use the principle of reciprocity as a guiding concept (see \cite{nowak1998evolution} and \cite{surma2016social}. Reciprocity, which has been widely studied in social psychology, suggests that individuals contribute based on what they receive from others. In the context of resource-sharing, this means that players are more inclined to share their resources if they can access the resources they need from their neighbors. We formalize this principle by proposing a utility function for the players. The utility function reflects the idea that players are willing to share resources if doing so allows them to obtain the resources they are missing from their network of neighbors. The specific utility function is defined as follows:
\begin{equation}\label{eq:U_rs}
U^{\te{NSG}}_i(\sigma_i,\sigma_{-i}) = 
\left\lbrace
\begin{array}{cc}
 \frac{\left(|L^p_i(\sigma)| - r\right)}{|L_i|}    &  \sigma_i = 1 \text{ and }|L^p_i(\sigma)| < r, \\
 \alpha    & \sigma_i = 1 \text{ and }|L^p_i(\sigma)| = r, \\
 0 & \sigma_i = 0,
\end{array}
\right.
\end{equation}
where $\te{NSG}$ stands for Network Sharing Game and $\alpha$ is a positive constant. Player $i$ receives zero utility for not participating in the sharing activity. If the player decides to participate, the resulting utility, as defined in (\ref{eq:U_rs}), is negative if the number of resources that $i$ can access in the closed neighborhood of participating players, $\mc{N}^p_i(\sigma)$, is less than $r$. However, if $i$ has access to all the $r$ resources, then $U_i$ equals a positive utility of $\alpha$. In this scenario, the utility is adjusted for the size of the set $|L_i|$. This means that players whose local neighborhoods do not have access to the complete set of resources will be less inclined to participate compared to those whose neighborhoods have all the resources.


\subsection{Analysis}
We begin by providing a qualitative description of the Nash equilibria to analyze the equilibrium characteristics of the network participation and resource-sharing games. This involves characterizing the conditions under which players' strategies form a stable equilibrium where no player has an incentive to unilaterally deviate.
In the next section, we present stochastic stability analysis for the network participation game for various network configurations, including line, ring, wheel, and two-dimensional grid topologies. The goal is to understand how the equilibrium behavior persists under stochastic perturbations and to identify which equilibrium configurations represent stable outcomes.

\subsubsection{Network Participation Game}\label{Sec:NPG}
We first analyze the network participation game for which the player utilities are defined in (\ref{eq:Util-k}), and players employ a combination of explorations and exploitation in Log-Linear Learning (LLL) as their decision strategy. We will use Nash equilibrium and stochastic stability as our solution concept to analyze network formation behavior for this setup. 
\begin{prop}\label{prop:NE_k}
    For the network participation game with utility function defined in (\ref{eq:Util-k}), an action profile $\sigma^*$ is a Nash equilibrium if either of the following conditions is satisfied. 
    \begin{enumerate}
        \item $\sigma_i^* = 0$ for all $i \in \mc{V}$. 
        \item For all $i$ in $V_{\sigma^*}$, $d(i,V_{\sigma^*}) \geq k$ and for all $j$ in $V \bs V_{\sigma^*}$, $d(j,V_{\sigma^*}) <  k$.       
    \end{enumerate}
\end{prop}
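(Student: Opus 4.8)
The plan is to collapse the equilibrium condition into a single best-response comparison whose key feature is that each player's incentive is \emph{decoupled} from their own action. The crucial observation is that the count $|\mc{N}^p_i(\sigma)|$ appearing in (\ref{eq:Util-k}) depends only on $\sigma_{-i}$, since it tallies neighbors $j \in \mc{N}_i$ with $\sigma_j = 1$ and therefore never sees $\sigma_i$. Fixing $\sigma_{-i}$ and abbreviating $m_i = |\mc{N}^p_i(\sigma)|$, player $i$ chooses between payoff $0$ (taking $\sigma_i = 0$) and, taking $\sigma_i = 1$, either $\alpha > 0$ when $m_i \geq k$ or $(m_i - k)/|\mc{N}_i| < 0$ when $m_i < k$. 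Comparing the two available payoffs shows the best response is uniquely $\sigma_i = 1$ if $m_i \geq k$ and uniquely $\sigma_i = 0$ if $m_i < k$; the inequality is strict in each regime, including the boundary $m_i = k$ where participation pays $\alpha$. Establishing this best-response map $B_i(\sigma_{-i})$ is the single substantive step, after which both cases are routine.

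Because Nash equilibrium only requires each player's own payoff to be unimprovable by a unilateral switch, checking a profile reduces to verifying that the prescribed action agrees with this map for every player, using $m_i = d(i, V_{\sigma^*}) = |\mc{N}_i \cap V_{\sigma^*}| = |\mc{N}^p_i(\sigma^*)|$. For Condition~1, the empty participation profile gives $m_i = 0$ for all $i$; since $k \geq 1$, we have $m_i < k$, so $\sigma_i = 0$ is the best response everywhere and matches the profile. For Condition~2, I would split the players into $V_{\sigma^*}$ and its complement. Each $i \in V_{\sigma^*}$ plays $1$ and by hypothesis has $m_i = d(i,V_{\sigma^*}) \geq k$, so $1$ is its best response (switching to $0$ drops $\alpha$ to $0$). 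Each $j \notin V_{\sigma^*}$ plays $0$ and by hypothesis has $m_j = d(j,V_{\sigma^*}) < k$, so $0$ is its best response (switching to $1$ yields the negative payoff $(m_j - k)/|\mc{N}_j|$). Every player being already best-responding, $\sigma^*$ is a Nash equilibrium.

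I do not anticipate a real obstacle, since the decoupling of $m_i$ from $\sigma_i$ turns the whole argument into a two-case payoff comparison; the care points are merely bookkeeping. First, the boundary $m_i = k$ must be read off the second line of (\ref{eq:Util-k}) so that it lands in the participation regime. Second, Condition~1 silently uses $k \geq 1$: if $k = 0$ the empty profile is not an equilibrium, so I would state $k \geq 1$ explicitly. Third, the normalization presumes $|\mc{N}_i| > 0$, so isolated vertices should be excluded or handled separately. I would finally remark that the same best-response map yields the converse as well, so Conditions~1 and~2 in fact exhaust the Nash equilibria, even though the statement only asserts sufficiency.
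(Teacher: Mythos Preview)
Your proposal is correct and follows essentially the same approach as the paper: both arguments verify the Nash condition by computing, for each player, the two available payoffs $U^{\te{NPG}}_i(0,\sigma^*_{-i})=0$ and $U^{\te{NPG}}_i(1,\sigma^*_{-i})$ and checking that the prescribed action is the (strict) maximizer, splitting on whether the player lies in $V_{\sigma^*}$ or its complement. Your framing via the best-response map $B_i(\sigma_{-i})$ and the explicit remark that $|\mc{N}^p_i(\sigma)|$ is independent of $\sigma_i$ make the argument slightly cleaner, and your side observations (the need for $k\geq 1$, the $|\mc{N}_i|>0$ normalization, and that the converse also holds) are useful additions the paper leaves implicit.
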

\begin{proof}
To prove the first part of the proposition, suppose the joint action profile is $\sigma^{(0)} = (0,0\ldots,0)$ resulting in zero utility for all the players. At each iteration, a random player, say $i$, is selected to action revision under LLL. Then,
\[
U^{\te{NPG}}_i(0,\sigma_{-i}) = 0 \text{ and }U^{\te{NPG}}_i(1,\sigma_{-i}) = -k/|\mc{N}_i|.
\]
Thus, each player prefers no participation over participation if the current joint profile is $(0,\ldots,0)$. 

For condition 2), the joint action profile $\sigma $ is represented as $\sigma = (\sigma_{V_{\sigma^*}},\sigma_{V \bs V_{\sigma^*}})$. For each player in $V_{\sigma^*}$, the condition $d(i,V_{\sigma^*}) \geq k$ implies that player $i$ has at least $k$ neighbors who are participating in the activity. Therefore, $U^{\te{NPG}}_i(1,\sigma^*_{-i}) > U^{\te{NPG}}_i(0,\sigma^*_{-i})$. The strict inequality is enforced by a positive value of $\alpha$ in the utility function when the number of participating neighbors is greater or equal to $k$.  Similarly, for all players $j$ in $V \bs V_{\sigma^*}$, the condition $d(j, V_{\sigma^*}) < k$ implies that the number of participating neighbors is less than $k$ and therefore $U^{\te{NPG}}_i(0,\sigma^*_{-i}) < U^{\te{NPG}}_i(1,\sigma^*_{-i})$.
\end{proof}

\begin{prop}\label{prop:acyclic_k}
    For the network participation game with utility function defined in (\ref{eq:Util-k}), all best response paths are acyclic. 
\end{prop}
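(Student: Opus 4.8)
The plan is to reduce best responses to a simple threshold rule and then exhibit a scalar function that strictly increases along every best-response path. First I would observe that since $U^{\te{NPG}}_i(0,\sigma_{-i}) = 0$, while $U^{\te{NPG}}_i(1,\sigma_{-i})$ equals $\alpha > 0$ when $|\mc{N}^p_i(\sigma)| \geq k$ and is strictly negative when $|\mc{N}^p_i(\sigma)| < k$, player $i$ has a \emph{unique} best response: participate iff $|\mc{N}^p_i(\sigma)| \geq k$, and drop out otherwise. In particular there are never ties, so each step of a best-response path is a genuine action change in which the mover strictly improves its own payoff. This is the same characterization already used in the proof of Proposition \ref{prop:NE_k}.

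Next I would introduce the candidate Lyapunov function
\[
\Phi(\sigma) = \sum_{(i,j) \in \mc{E}} \sigma_i \sigma_j - (k-\epsilon)\sum_{i \in \mc{V}} \sigma_i,
\]
for a fixed constant $\epsilon \in (0,1)$, and track how $\Phi$ changes when a single player $i$ flips its action. The only terms of $\Phi$ involving $\sigma_i$ are $\sigma_i \sum_{j \in \mc{N}_i}\sigma_j = \sigma_i\, |\mc{N}^p_i(\sigma)|$ and $-(k-\epsilon)\sigma_i$, so a join move ($\sigma_i\colon 0 \to 1$) changes $\Phi$ by $|\mc{N}^p_i(\sigma)| - k + \epsilon$, while a leave move ($\sigma_i\colon 1 \to 0$) changes it by $k - |\mc{N}^p_i(\sigma)| - \epsilon$, where $|\mc{N}^p_i(\sigma)|$ is unaffected by $i$'s own flip.

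I would then verify that both quantities are strictly positive exactly on best-response moves. A join is a best response only when $|\mc{N}^p_i(\sigma)| \geq k$, giving $\Delta\Phi \geq \epsilon > 0$; a leave is a best response only when $|\mc{N}^p_i(\sigma)| < k$, which, since $|\mc{N}^p_i(\sigma)|$ and $k$ are integers, forces $|\mc{N}^p_i(\sigma)| \leq k-1$ and hence $\Delta\Phi \geq 1 - \epsilon > 0$. Therefore $\Phi$ strictly increases at every step of any best-response path. Because $\mc{A}$ is finite, $\Phi$ takes only finitely many values, and a strictly increasing sequence of values along a path can never repeat; consequently no action profile can recur, so every best-response path is acyclic (which, via Thm.\ 3.2 of \cite{voorneveld2000best}, also certifies that the game is a best-response potential game).

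The subtlety I expect to be the crux is the boundary case $|\mc{N}^p_i(\sigma)| = k$ on a join move. The naive function $\sum_{(i,j)\in\mc{E}}\sigma_i\sigma_j - k\sum_i \sigma_i$ stays flat there, yet the move is a legitimate strict best response because the payoff jumps from $0$ to $\alpha > 0$. The perturbation $\epsilon$ is calibrated to break precisely this tie: it must be positive so that marginal joins strictly increase $\Phi$, and strictly below $1$ so that it cannot overcome the integer gap $k - |\mc{N}^p_i(\sigma)| \geq 1$ produced by leave moves. I would state explicitly the standing assumption that the threshold $k$ is a positive integer, since that integrality is exactly what guarantees the unit gap on which the argument rests.
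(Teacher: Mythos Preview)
Your argument is correct, and it takes a genuinely different route from the paper's. The paper also works with what is essentially the edge-count minus linear term: it sets $G(\sigma)=\sum_i \sigma_i(|\mc{N}^p_i(\sigma)|-k)$, which equals $2\sum_{(i,j)\in\mc{E}}\sigma_i\sigma_j-k\sum_i\sigma_i$. However, $G$ is \emph{not} monotone along best-response paths: a leave move at a profile with $k/2<|\mc{N}^p_i(\sigma)|<k$ actually decreases $G$. The paper therefore argues by contradiction, pairs each $0\to1$ transition of a player with its subsequent $1\to0$ transition along a hypothetical cycle, and shows that each such pair contributes $2\bigl(|\mc{N}^p_i(\sigma^{q_1})|-|\mc{N}^p_i(\sigma^{q_2})|\bigr)>0$ to the total change in $G$, contradicting closure of the cycle.

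Your $\epsilon$-perturbation sidesteps this pairing entirely: by shifting the linear coefficient from $k$ to $k-\epsilon$ with $\epsilon\in(0,1)$, you obtain a function that strictly increases on every single best-response step, so acyclicity is immediate. This is cleaner, and it has the bonus that your $\Phi$ is itself a best-response potential function, which the paper only obtains indirectly via Thm.~3.2 of \cite{voorneveld2000best}. The paper's approach, by contrast, avoids introducing the auxiliary parameter $\epsilon$ and the explicit appeal to integrality of $k$, at the cost of the more delicate pairing bookkeeping. Your remark that the integrality of $k$ is what produces the unit gap on leave moves is exactly the point the paper leaves implicit.
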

\begin{proof}
    Suppose the above statement is not true and there exists a best response path $\mc{P} = (\sigma^1,\sigma^2,\cdots, \sigma^{l},\sigma^{l+1})$ that is a cycle, i.e., $\sigma^1 = \sigma^{l+1}$. For a path to be a cycle, every player that transitions from $0$ to $1$ must transition back to $1$ and every player that transitions from $1$ to $0$ must transition back from $0$ to $1$. Thus, we can only have a cycle with an even number of transitions, which implies that $l$ must be an even number with $ \sigma^{l+1} = \sigma^1$.

    Let $G$ be a global function on the set of joint action profiles and is defined as
    \[
        G(\sigma) = \sum_{i=1}^n  \hat{U}_i(\sigma_i,\sigma_{-i}) , 
    \]
    where 
    \begin{equation}
       \hat{U}_i(\sigma_i,\sigma_{-i}) = \sigma_i \left(|\mc{N}^p_i(\sigma)|  - k \right).
    \end{equation}
    If the path $\mc{P}$ is a cycle, then the following equality must hold
    \[
    \sum_{q = 1}^{l} \left[G(\sigma^{q+1})  - G(\sigma^q)\right] = 0. 
    \]
    Since every transition must be reversed in a cyclic path. Suppose
    \[
      \mc{P}(\sigma^{q_1},\sigma^{q_1+1}) = \mc{P}(\sigma^{q_2},\sigma^{q_2+1}) = i,
    \]
    i.e., player $i$ updates the actions in transitions $(\sigma^{q_1},\sigma^{q_1+1})$ and $(\sigma^{q_2},\sigma^{q_2+1})$. Without loss of generality, we assume that $i$ transitions from $0$ to $1$ in $\sigma^{q_1}$ to $\sigma^{q_1+1}$ transition and from $1$ to $0$ in some later transition $\sigma^{q_2}$ to $\sigma^{q_2}$, where $1< q_1 < q_2<l $. Then, 
    \begin{align*}
        &G(\sigma^{q_1+1})-G(\sigma^{q_1}) = [\hat{U}_i(1,\sigma^{q_1}_{-i}) - \hat{U}_i(0,\sigma^{q_1}_{-i}] + \\
       & \sum_{j\in \mc{N}^p_i(\sigma^{q_1})} \left[\hat{U}_j(\sigma^{q_1}_j,1,\sigma^{q_1}_{-\{i,j\}}) -~  \hat{U}_j(\sigma^{q_1}_j,0,\sigma^{q_1}_{-\{i,j\}}) \right]
+ \\      &\sum_{j\notin \mc{N}^p_i[\sigma^{q_1}]} \left[\hat{U}_j(\sigma^{q_1}) - \hat{U}_j(\sigma^{q_1}) \right].
    \end{align*}
    Here $\hat{U}_j(\sigma^{q_1}_j,1,\sigma^{q_1}_{-\{i,j\}})$ is the updated utility of player $j$ when $\sigma^{q_1}_j$ is the action of $j$, $\sigma^{q_1}_i = 1$ is the action of player $i$ and $\sigma^{q_1}_{-\{i,j\}}$ is the joint action profile of actions of players other than $i$ and $j$. 
    
    The last term in the above equation is equal to zero since the utility of the players that are not in the participating neighborhood of player $i$ remains the same from $\sigma^{q_1}$ to $\sigma^{q_1+1}$. Player $i$ transitions from $0$ to $1$ as a best response if $|\mc{N}^p_i(\sigma^{q_1})|\geq k$. Thus, 
    \[
   [\hat{U}_i(1,\sigma^{q_1}_{-i}) - \hat{U}_i(0,\sigma^{q_1}_{-i}) ] = |\mc{N}^p_i(\sigma^{q_1})|-k.     
    \]
    The second term in the expression for $G(\sigma^{q_1}+1)-G(\sigma^{q_1})$ is the impact of player $i$'s participation on the utility of his participating neighbors and is equal to $\mc{N}^p_i(\sigma^{q_1})$ since the utility of each participating neighbor of $i$ is increased by a factor of $1/\mc{N}_j$. Thus, 
    \[
    G(\sigma^{q_1+1})-G(\sigma^{q_1}) = 2\mc{N}^p_i(\sigma^{q_1}) - k.
    \]

    When player $i$ transitions from $1$ to $0$ at some later step of the cycle, say $\sigma^{q_2} = (1,\sigma^{q_2}_{-i})$ and $\sigma^{q_2+1} = (0,\sigma^{q_2}_{-i})$, then 
    \[
    G(\sigma^{q_2+1})-G(\sigma^{q_2}) = -(2\mc{N}^p_i(\sigma^{q_2}) - k).
    \]
    Since a transition from $1$ to $0$ is the best response only if the number of participating neighbors is less than $k$, we get 
    $\mc{N}^p(\sigma^{q_2}) < k \leq \mc{N}^p(\sigma^{q_1})$. Therefore, 
    \[
     [G(\sigma^{q_1+1})-G(\sigma^{q_1})] + [G(\sigma^{q_2+1})-G(\sigma^{q_2})] > 0.
    \]
    Thus, given the best response cycle, we can divide the entire path into a pair of transitions of individual players from $1$ to $0$ and then from $0$ to $1$. The corresponding difference in the global function $G(\cdot)$ for these transition pairs is always greater than zero. Therefore, \[
    \sum_{p = 1}^{l-1} \left[G(\sigma^{p+1})  - G(\sigma^p)\right] > 0 
    \]
    for any best response cycle, which is a contradiction. Thus, the best response path cannot be a cycle for our network participation game with the utility function defined in (2). 
    \end{proof}
\begin{prop}
    Consider the network participation game with utility function defined in (\ref{eq:Util-k}). If all the players adhere to Log-Linear Learning (LLL), then the set of stochastically stable action profiles belongs to the set of Nash equilibria. 
\end{prop}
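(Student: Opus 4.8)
The plan is to obtain the result as a direct consequence of the two preceding propositions together with the two theorems quoted at the end of Section~\ref{sec:Setup}. The network participation game has a finite player set $\mc{V} = \{1,\ldots,n\}$ and finite binary action sets $\mc{A}_i = \{0,1\}$, so it falls squarely within the hypotheses of both \cite{voorneveld2000best} and \cite{alos10}. The argument therefore reduces to establishing that the game is a best-response potential game and then invoking the stochastic stability characterization available for such games.

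First I would invoke Proposition~\ref{prop:acyclic_k}, which establishes that every best-response path in the network participation game is acyclic. This is precisely the hypothesis required by Thm~3.2 of \cite{voorneveld2000best}: a game with finitely many players and finitely many actions is a best-response potential game if and only if no best-response path contains a cycle. Since our game has finite players and finite action sets and admits no best-response cycles, I would conclude that there exists a potential function $\Phi$ satisfying the best-response potential condition in (\ref{eq:BR_Potential}).

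Next I would apply Thm~2 of \cite{alos10}, which asserts that in any best-response potential game in which all players follow Log-Linear Learning, the set of stochastically stable action profiles is contained in the set of Nash equilibria. Because the players in our setup adhere to LLL as specified by the update rule (\ref{eq:pi_LLL}), the conclusion follows immediately: every stochastically stable profile of the network participation game belongs to the set of Nash equilibria, which is what the proposition claims.

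Since the argument is essentially a chaining of established results, the principal point requiring care is not a difficult calculation but the careful matching of hypotheses across the two cited theorems. In particular, I would confirm that the LLL dynamics used here coincide with the noisy best-response dynamics assumed in \cite{alos10} (a single revising player selected per step, the Gibbs selection rule with temperature parameter $T$, and the resulting ergodic, reversible Markov chain with a unique stationary distribution), and that the finiteness conditions underlying Thm~3.2 of \cite{voorneveld2000best} are met. Once these identifications are verified, no further work is needed and the proposition follows.
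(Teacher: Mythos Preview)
Your argument is correct and is in fact the cleanest way to deduce the proposition from the machinery the paper already quotes at the end of Section~\ref{sec:Setup}: you chain Proposition~\ref{prop:acyclic_k} with Thm~3.2 of \cite{voorneveld2000best} to obtain a best-response potential, and then invoke Thm~2 of \cite{alos10} to conclude. The paper's own proof takes a slightly different, more hands-on route. Instead of citing those two theorems, it uses Proposition~\ref{prop:NE_k} (which you do not invoke) to note that every Nash equilibrium is strict, combines this with the acyclicity from Proposition~\ref{prop:acyclic_k} and the finiteness of $\mc{A}$ to argue that every best-response path terminates at a Nash equilibrium, and then reasons directly about the Markov chain induced by LLL: at $T=0$ the chain follows best-response paths, so the Nash equilibria are exactly the absorbing states, and hence the stochastically stable profiles in the $T\to 0$ limit lie in the set of Nash equilibria.

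The trade-off is that your approach is shorter and leverages exactly the results the paper set up for this purpose, whereas the paper's approach is more self-contained (it does not rely on the reader unpacking the proofs in \cite{voorneveld2000best} and \cite{alos10}) and makes explicit why strictness of the equilibria matters for the absorbing-state picture.
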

\begin{proof}
    The proof is a direct consequence of Props. \ref{prop:NE_k} and \ref{prop:acyclic_k}.  Prop. \ref{prop:NE_k} implies that all the Nash equilibria of the game are strict, i.e. if $\sigma^* \in \mc{A}$ is a Nash equilibrium then $U^{\te{NPG}}_i(\sigma^*_i,\sigma^*_{-i})$ is strictly greater than $U^{\te{NPG}}_i(\sigma_i,\sigma^*_{-i})$ for any $\sigma_i \in A_i$. Prop \ref{prop:acyclic_k} establishes that the best response paths are acyclic. For a finite number of players with a finite number of actions, $|\mc{A}|$ is finite, and therefore, every best response path should have a finite length. 
    Thus, every best response path has to terminate, and it cannot terminate to any profile other than a Nash equilibrium. The Markov chain induced by LLL follows a best response path with high probability and Nash equilibria are the only absorbing states of the Markov chain if $T = 0$. Thus, in the limiting case of $T \rightarrow 0$, the stochastically stable profiles for which $\mu_T^{\te{LLL}} >0 $ will belong to the set of Nash equilibria, which concludes the proof. 
\end{proof}

\subsubsection{Network Resource Sharing Game}\label{Sec:NSG}
Next, we characterize the set of Nash equilibria for the resource-sharing game setup. 

\begin{prop}\label{prop:NE_rs}
    For the network sharing game with utility function defined in (\ref{eq:U_rs}), an action profile $\sigma^*$ is a Nash equilibrium if either of the following conditions is satisfied. 
    \begin{enumerate}
        \item $\sigma_i^* = 0$ for all $i \in V$. 
        \item For all $i$ in $V_{\sigma^*}$, $|L_i(\sigma^*)| = r$ and for all $j$ in $V \bs V_{\sigma^*}$, $|L_i(\sigma^*)|  <  r$, where $L_i(\sigma^*)$ is the set of resources that player $i$ can access in the closed neighborhood of participating players.     
    \end{enumerate}
\end{prop}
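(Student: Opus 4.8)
The plan is to follow the same template as the proof of Prop.~\ref{prop:NE_k}, since the sharing game is structurally parallel: for each of the two stated configurations I would verify directly from the utility in (\ref{eq:U_rs}) that no player can strictly increase their payoff by a unilateral deviation. The observation that makes this clean is that, by the definition in (\ref{eq:L}), the accessible-resource set $L^p_i(\sigma) = \bigcup_{j \in \mc{N}^p_i[\sigma]} l(j)$ depends on $i$'s own action only through the fixed inclusion of $l(i)$ in the closed neighborhood; toggling $\sigma_i$ between $0$ and $1$ never changes $L^p_i(\sigma)$, it only selects which branch of the piecewise utility applies. Hence for the Nash check I need only compare the two branches at a frozen neighbor configuration $\sigma^*_{-i}$. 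Throughout I identify the proposition's $L_i(\sigma^*)$ with $L^p_i(\sigma^*)$ from (\ref{eq:L}).

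For condition (1), the all-zero profile, every player has utility $0$. If a player $i$ deviates to $\sigma_i = 1$, then its participating closed neighborhood collapses to $\{i\}$, so $L^p_i(\sigma) = l(i)$ and $|L^p_i(\sigma)| = s$. Provided $s < r$ (each player alone is resource-deficient), the deviation payoff is $(s-r)/|L_i| < 0 = U^{\te{NSG}}_i(0,\sigma_{-i})$, so no deviation is profitable and the profile is Nash. For condition (2), I split by membership in $V_{\sigma^*}$. For $i \in V_{\sigma^*}$ the hypothesis $|L^p_i(\sigma^*)| = r$ gives $U^{\te{NSG}}_i(1,\sigma^*_{-i}) = \alpha > 0 = U^{\te{NSG}}_i(0,\sigma^*_{-i})$, so $i$ strictly prefers to stay in; for $j \in V \bs V_{\sigma^*}$ the hypothesis $|L^p_j(\sigma^*)| < r$ gives $U^{\te{NSG}}_j(1,\sigma^*_{-j}) = (|L^p_j(\sigma^*)| - r)/|L_j| < 0 = U^{\te{NSG}}_j(0,\sigma^*_{-j})$, so $j$ strictly prefers to stay out. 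In both cases the positivity of $\alpha$ supplies the strict inequality, exactly as in Prop.~\ref{prop:NE_k}.

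I do not expect a serious obstacle in the Nash verification itself, because---unlike a threshold count $|\mc{N}^p_i(\sigma)|$ that changes by a single unit---the union structure of $L^p_i(\sigma)$ is irrelevant here: the argument only compares the participating and non-participating branches at a fixed $\sigma^*_{-i}$ and never needs to track how $|L^p_i|$ grows as neighbors are added. The one place demanding care is the implicit hypothesis $s < r$ in condition (1): if $s = r$, then a single participant already accesses all resources, earns $\alpha > 0$, and the all-zero profile ceases to be an equilibrium, so I would state this caveat explicitly. I would also flag that the set-union (as opposed to linear-count) nature of $L^p_i$ is precisely what would make an acyclicity/potential argument---the analogue of Prop.~\ref{prop:acyclic_k}---substantially harder for the sharing game, though that lies outside the present statement.
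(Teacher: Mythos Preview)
Your proposal is correct and matches the paper's approach: the paper's proof is a terse reference to the analogous verification in Prop.~\ref{prop:NE_k} (and to \cite{luqman2024dynamic}), and the commented-out detailed argument in the source carries out exactly the branch-by-branch comparison of $U^{\te{NSG}}_i(1,\sigma^*_{-i})$ against $U^{\te{NSG}}_i(0,\sigma^*_{-i}) = 0$ that you outline. Your explicit flagging of the implicit $s < r$ hypothesis for condition~(1) and the remark on why an acyclicity argument would be harder for the sharing game are valid additions that the paper does not make.
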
  

\begin{proof}
The proof follows the same reasoning as Prop. \ref{prop:acyclic_k}. We refer the reader to our earlier work \cite{luqman2024dynamic} for the details. 
\end{proof}

\subsection{Simulation}\label{sec:Simulation_local}
We validated the effectiveness of our proposed models through simulations, using Erdős-Rényi (ER) graphs to represent the underlying social network. We compared the resulting participation and resource-sharing networks with the $k$-core and $(r, s)$-core structures from the literature. The simulations showed that our framework consistently converges to these core structures, even when starting from zero participation or random initial states. This independence from initial condition distinguish our approach from prior methods, which often assume full initial participation, enabling a more comprehensive analysis of network evolution and stability.

In an Erdős-Rényi (ER) graph, edges form between any pair of nodes with probability $p$. For our experiments, we used ER graphs with $n = 1000$ players and a link probability of $p = 0.01$, giving each player an average of ten neighbors. We generated 100 random ER graphs and verified our results across all of them, then computed average performance metrics to ensure the robustness of our findings.

\begin{figure*}[t]
    \centering
    \subfigure[Participation Network: $k$-core size]{
        \includegraphics[scale=0.48]{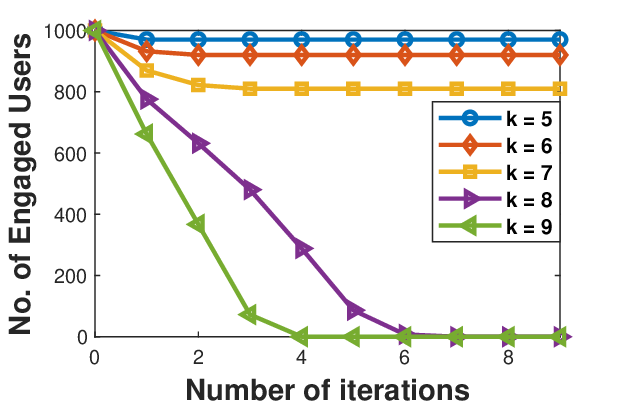}
        \label{fig:kcore}
    }
    \hspace{0.1in}
    \subfigure[Participation Network using LLL (Initial condition: $\sigma^{(0)}$)]{
        \includegraphics[scale=0.48]{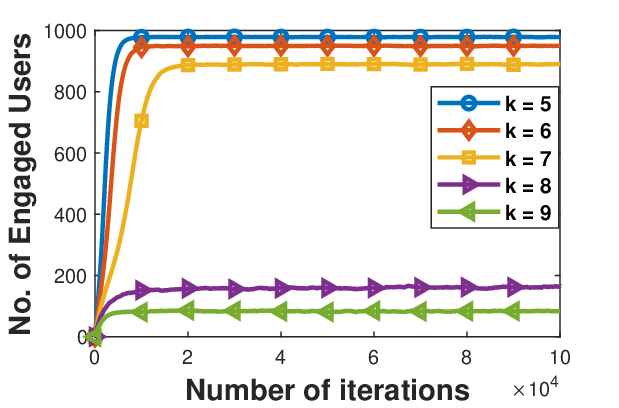} 
        \label{fig:LLL_kcore}
    }    
    \hspace{0.1in}
    \subfigure[Participation Network  using LLL: Random initial conditions]{
        \includegraphics[scale=0.48]{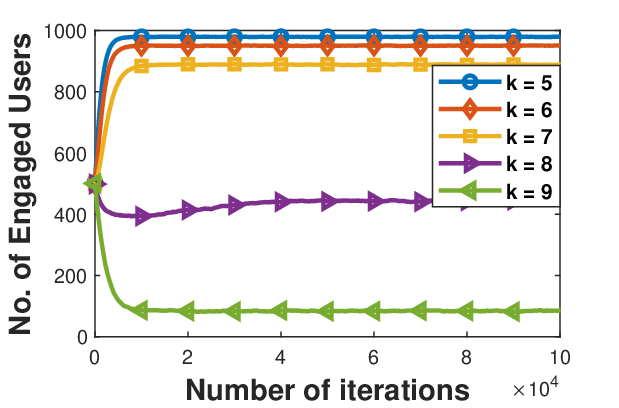} 
        \label{fig:LLL_kcore_rand}
    }    \\
    \subfigure[Resource Sharing Network: $(r,s)$-core size]{
        \includegraphics[scale=0.48]{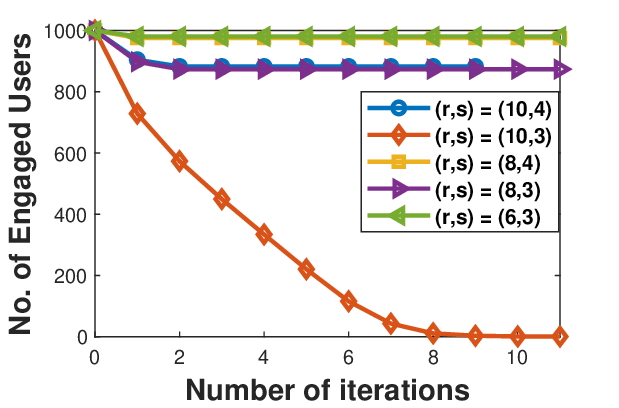} 
        \label{fig:rscore}
    }
    \hspace{0.1in}
    \subfigure[Resource Sharing Newtork using LLL (Initial condition: $\sigma^{(0)}$) ]{
        \includegraphics[scale=0.48]{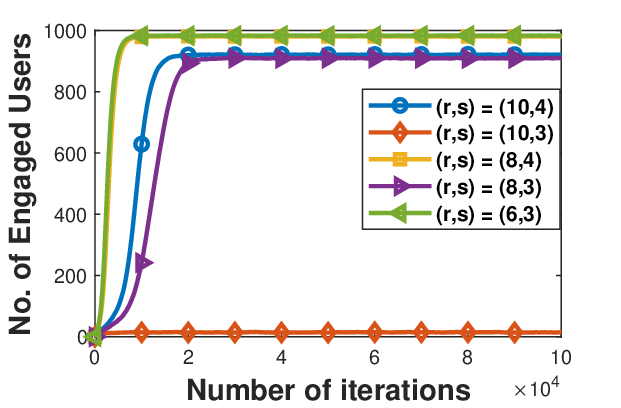} 
        \label{fig:rscore_LLL}
    }
    \hspace{0.1in}
    \subfigure[Sharing Network using LLL (Random initial conditions)]{
        \includegraphics[scale=0.48]{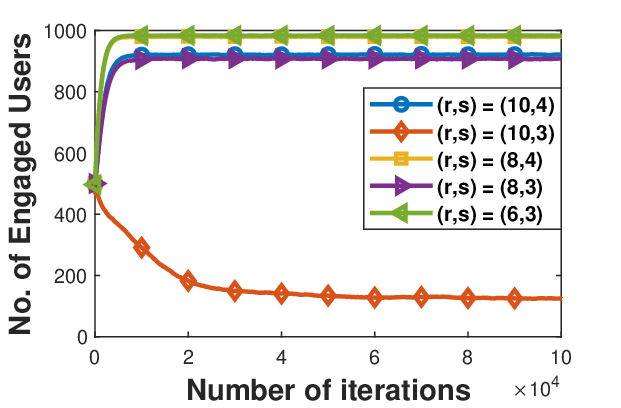} 
        \label{fig:rscore_rand}
    }
    \caption{Simulation results for the network participation and resource sharing Games}
    \label{fig:combined_results}
    \end{figure*}
\subsubsection{Network Participation Game}For the network participation game, we set the Log-Linear Learning (LLL) noise parameter to $T = 0.3$ and the utility parameter $\alpha = 1$, as defined in equation (\ref{eq:Util-k}). We ran simulations for threshold values $k = \{5, 6, 7, 8, 9\}$, representing the minimum number of neighbors required for participation, with results shown in Fig. \ref{fig:combined_results}. Fig. \ref{fig:kcore} illustrates the outcomes of the standard $k$-core selection algorithm \cite{bhawalkar2015preventing}, where players not meeting the $k$-neighbor condition are iteratively removed, starting from full participation. The horizontal axis represents the number of iterations, and the vertical axis shows the number of engaged players after each iteration, which converges to the number of players in the $k$-core. The results are averaged over 100 random ER networks.  As shown in the figure, the $k$-core is empty for $k = 8$ and $k = 9$, while a substantial fraction of players are retained for $k < 8$.

The plots in Figs. \ref{fig:LLL_kcore} and \ref{fig:LLL_kcore_rand} show the results of our proposed network participation approach with zero and random initial conditions, respectively. The horizontal axis represents the number of iterations of the Log-Linear Learning (LLL) algorithm, while the vertical axis corresponds to the number of players participating in the activity. Participation steadily increases under LLL, stabilizing after about 20,000 iterations. The results demonstrate a clear relationship between the $k$-core size in Fig. \ref{fig:kcore} and the final network size across different threshold values of $k$, highlighting that our approach effectively promotes network participation, even from minimal or random initial states.

We compared each player's actions after $k$-core formation via the iterative withdrawal algorithm to their steady-state actions under LLL. Due to the exploratory behavior from the noise parameter $T$, players occasionally change their actions, but such events are rare. To assess steady-state behavior, we analyzed player actions over the final 30,000 LLL iterations, during which the network state was stable (as shown in Figs. \ref{fig:LLL_kcore} and \ref{fig:LLL_kcore_rand}). We applied the criterion that if a player participated in more than 95\% of these iterations, they were considered a network member; otherwise, their participation was considered sporadic and exploratory.

\begin{table}[t]
\begin{center}
\begin{tabular}{ |c|c | c|c|c|c|}
\hline
Network size $/$ $k$ & 5 & 6 & 7 & 8 & 9  \\ 
\hline
Cascading Withdrawal ($\sigma{(0)}= \sigma^{(n)})$ & 973 &919&820 &0&0 \\
 \hline
LLL ($\sigma(0) = \sigma^{(0)})$ & 971 &915&797 &0&0 \\
\hline
LLL ($\sigma(0) = $ Random) &970 &915&797 &279& 0 \\
\hline
\end{tabular}
\end{center}
\caption{ Performance comparison for Network Participation Game}
\label{table:kcore}
\vspace{-0.3in}
\end{table}

The results of this analysis are summarized in Table \ref{table:kcore} for two scenarios: no initial participation ($\sigma^{(0)}$) and random initial conditions. For $k \in \{5,6,7,9\}$, the expected number of nodes in the $k$-core, computed via the \emph{cascading eithdrawal algorithm}, is compared with the number of participating nodes at steady state under LLL. The difference is negligible in all cases except for $k=8$ under random initial conditions. These results suggest that in the network participation game, where players maximize utility using LLL, as in equation (\ref{eq:Util-k}), the $k$-core of the underlying social network is a strong predictor of long-term participation from any initial condition. Additionally, our framework effectively captures both diffusion-based and non-diffusion-based network evolution, offering a comprehensive model of participation dynamics over time.

\subsubsection{Network Resource Sharing Game}
We conducted a similar analysis for the Network Sharing Game, and the results are shown in Fig. \ref{fig:combined_results}. Using an ER graph with $n = 1000$ and $p = 0.01$, each player had an average of ten neighbors. We tested different $(r,s)$ parameter pairs: $(10,4)$, $(10,3)$, $(8,4)$, $(8,3)$, and $(6,3)$. For the utility function in (\ref{eq:U_rs}), we set the noise parameter $T = 0.15$ and $\alpha = 1$. The results were averaged over one hundred randomly generated ER networks. In Fig. \ref{fig:rscore}, we show the expected number of players included in the $(r,s)$-core of the ER networks. The $(r,s)$-core was empty only for the case $(r,s) = (10,3)$. For all other scenarios, more than $95\%$ of players were included in the core. Similarly, Fig. \ref{fig:rscore_LLL} presents the results for our proposed scheme in the network sharing game, where the vertical-axis indicates the number of players sharing resources according to the LLL decision strategy. The results closely align with those from the $(r,s)$-core analysis, demonstrating consistency in network behavior under our framework.

We compared the network performance under the $(r,s)$-core model with the size of the sharing network using our proposed scheme. Specifically, we analyzed the actions of all players in both the $(r,s)$-core and the steady state of our scheme, where we consider the steady state for the final thirty thousand iterations. The results, shown in Table \ref{table:rscore}, reveal that the discrepancies between the two approaches are negligible across all scenarios. This close agreement confirms that the players' actions under LLL align well with the $(r,s)$-core setup, validating the effectiveness of our proposed approach for the network sharing game as detailed in Section \ref{sec:NSG}.

\begin{table}[t]
\begin{center}
\begin{tabular}{ |c|c | c|c|c|c|}
\hline
Network size $/$ $(r,s)$ & (10,4) & (10,3) & (8,4) & (8,3) & (6,3)  \\ 
\hline
Cascading Withdrawal  & 885 &0&983 &870&980 \\
 \hline
LLL ($\sigma(0) = \sigma^{(0)})$ & 866 &2&972 &850&969 \\
\hline
LLL ($\sigma(0) = $ Random) &866 &56&972 &849& 970 \\
\hline
\end{tabular}
\end{center}
\caption{ Performance comparison for Network Participation Game}
\label{table:rscore}
\vspace{-0.3in}
\end{table}

\begin{figure*}[t]\label{fig:RingandWheel}
    \centering
    \subfigure[Ring network ]{
        \includegraphics[scale=0.45]{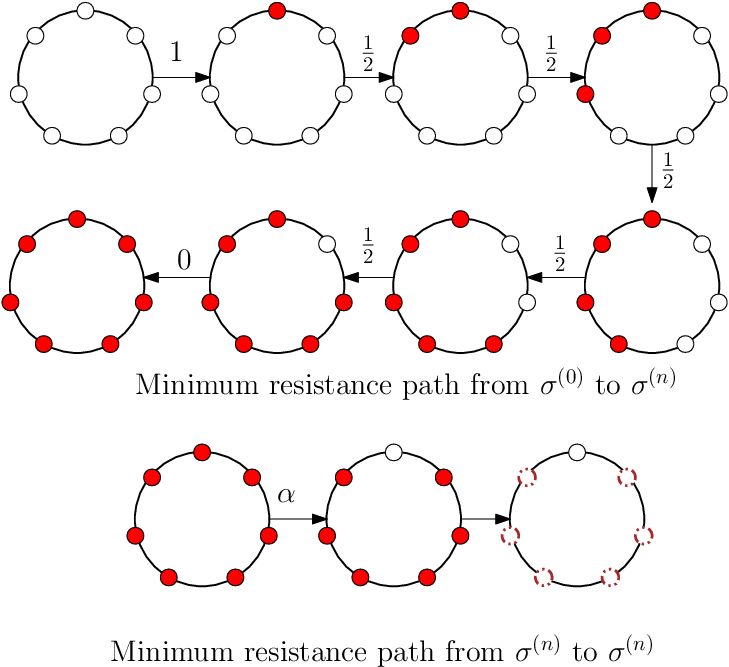}
        \label{fig:ring}
    }
    \hspace{0.2in}
    \subfigure[Wheel network]{
        \includegraphics[scale=0.55]{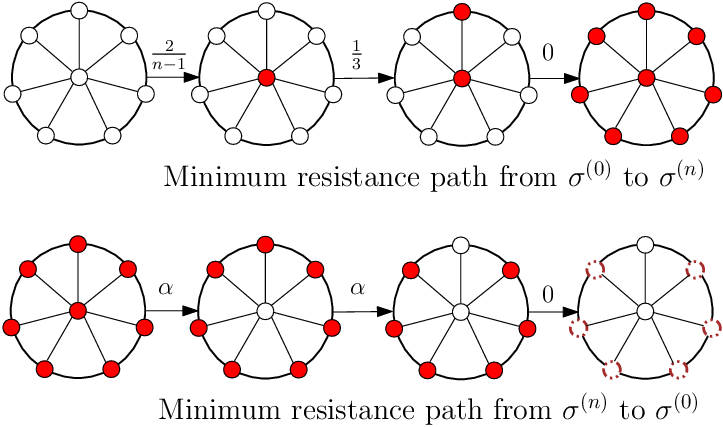} 
        \label{fig:wheel}
    }    
    \caption{Minimum resistance paths between Nash equilibria for Ring and Wheel networks. The numbers on the edges represent transition resistance.}
\end{figure*} 
\section{Stochastic Stability Analysis of Network Participation Game}\label{sec:Stochastic Stability}
In the previous section, we characterized the set of Nash equilibria for the network participation and resource-sharing games, demonstrating that stochastically stable profiles are contained within this set. In this section, we further refine our understanding of the long-run behavior of the network participation game by identifying the stochastically stable network configurations for specific classes of social networks. In particular, we focus on scenarios where the underlying social network is represented by one of the following graph types: line graph, ring graph, wheel graph, or two-dimensional grid. For this analysis, we restrict our study to the case where the participation threshold is set to $k = 2$ for all players. Our approach to identifying stochastically stable profiles is based on Radius-CoRadius analysis, as outlined in (\ref{eq:RdCR}). 

\subsection{Radius CoRadius Analysis}\label{sec:RdCR}
Our stochastic stability analysis is based on the Radius-CoRadius ($Rd-CR$) criteria presented in \cite{alos10}. This analysis offers a sufficient condition for identifying stochastically stable action profiles in Log-Linear Learning dynamics by comparing two parameters, namely Radius ($Rd$) and CoRadius ($CR$) of an action profile. This sufficient condition is derived based on the results from resistance tree analysis of the stationary distribution of ergodic Markov chain induced by LLL over the set of joint action profiles. Given an action profile pair $\sigma = (\sigma_i,\sigma_{-i})$ and $\sigma' = (\sigma'_i,\sigma_{-i})$ in which only one player has a different action, we define transition resistance from $\sigma$ to $\sigma'$ as
\[
R(\sigma,\sigma') = \max_{\hat{\sigma}_i \in \mc{A}_i} U^{\te{NPG}}_i(\hat{\sigma}_i,\sigma_{-i}) - U^{\te{NPG}}_i(\sigma'_i,\sigma_{-i}). 
\]
The resistance of a path $\mc{P} = (\sigma^0,\sigma^1,\cdots,\sigma^{l-1})$ is 
\[
R^p(\mc{P}) = \sum_{t = 0}^{l-2} R(\sigma^t,\sigma^{t+1}).
\]
 For any pair of profiles $\sigma$ and $\sigma'$ in $\mc{A}$, let $\Omega(\sigma,\sigma')$ be the set of all paths from $\sigma$ to $\sigma'$. 
Then, $R_{\te{min}}(\sigma,\sigma')$ be the minimum resistance  of all the paths in  $\Omega(\sigma,\sigma')$ . The \emph{basin of attraction} of an action profile $\sigma$, $\te{BoA}(\sigma)$, is the set of all profiles $\sigma' \in \mc{A}$ such that there exists a path of zero resistance from $\sigma'$ to $\sigma$.  The Radius of an action profile $\sigma$ is 
\begin{equation}\label{eq:Radius}
    Rd(\sigma) = \min \{R_{\te{min}}(\sigma,\sigma') ~\text{ for all } \sigma' \in \mc{A} \bs \te{BoA}(\sigma)\},
\end{equation}
and the CoRadius of $\sigma$ is
\begin{equation}\label{eq:CoRadius}
    CR(\sigma)  = \min\{R_{\te{min}}(\sigma,\sigma') \text{ for all } \sigma'
    \in \mc{A} \bs \te{BoA}(\sigma)\}.
\end{equation}
The Radius of a profile measures how easy it is to leave that profile, and the Coradius measures how difficult it is to reach it starting from any initial condition. 
\begin{definition} (\emph{$Rd-CR$ Criteria}) \cite{alos10}
    Given an action profile $\sigma \in \mc{A}$,  if
\begin{equation}\label{eq:RdCR}
Rd(\sigma) > CR(\sigma),
\end{equation}
then, $\sigma$ is stochastically stable. 
\end{definition}

Next, we identify stochastically stable profiles for the network participation game across different social network configurations using the $Rd-CR$ result outlined in equation (\ref{eq:RdCR}). To perform the $Rd-CR$ analysis, we first need to determine the minimum resistance paths between Nash equilibria. We denote an action profile where $p$ players participate as $\sigma^{(p)}$.\\
\emph{Assumption:} For the following analysis, we assume that the participation threshold is $k = 2$. 

\subsubsection{Line Graph} 
If the players are arranged in a line topology, the participation game will have a unique equilibrium in which no player participates. The reason is that the two end users only have one neighbor, and they will not participate. Their lack of participation will diffuse through the network through their neighbors.

   \begin{figure*}[t]
    \centering
    \subfigure{
        \includegraphics[scale=0.88]{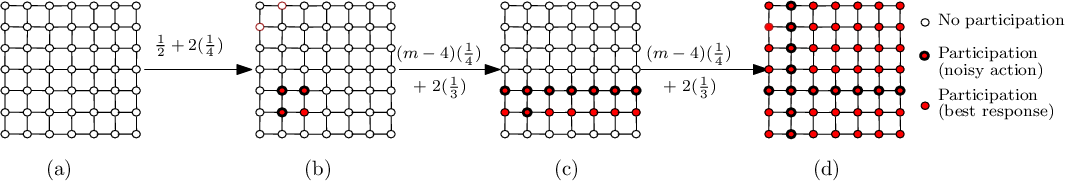}
            }
    \caption{A minimum resistance path from $\sigma^{(0)}$ to $\sigma^{(n)}$ for $CR(\sigma^{(n)}$. Empty nodes represent players who are not participating, red nodes with dark boundaries  represent players who are participating as noisy action, and red nodes with regular boundaries represent players who started to participate as their best action.}
    \label{fig:2DGrid}
\end{figure*} 

\begin{figure*}[t]
    \centering
    \subfigure{
        \includegraphics[scale=0.75]{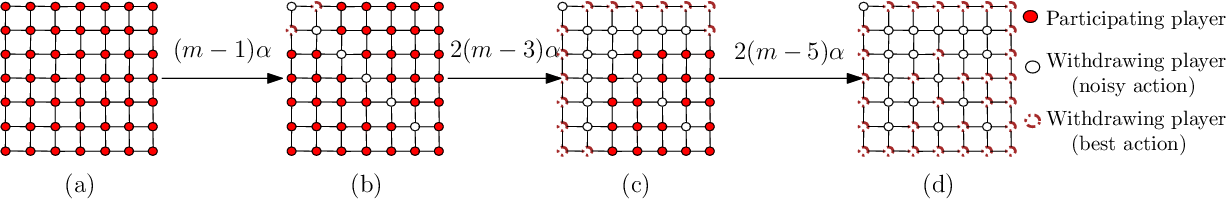}
            }
    \caption{A minimum resistance path from $\sigma^{(n)}$ to $\sigma^{(0)}$ for $CR(\sigma^{(0)})$. Solid nodes (red color) represent participating players, empty nodes with regular boundaries represent players who left the network as noisy action, and empty nodes with dashed/dotted lines represent players who left the network as their optimal action. }
    \label{fig:CR_2DGrid}
\end{figure*} 

\subsubsection{Ring Network} In the ring network shown in Fig. \ref{fig:ring}, each player has two neighbors. The game has two Nash equilibria, one in which all players participate, $\sigma^{(n)}$, and the other in which no player participates, $\sigma^{(0)}$. The minimum resistance paths between the two equilibria are presented in Fig. \ref{fig:ring}. The Radius of $\sigma^{(0)}$ and $\sigma^{(n)}$ are 
\begin{align*}
    Rd(\sigma^{(0)}) &= R_{\te{min}}(\sigma^{(0)},\sigma^{(n)}) = 1 + (n-2)/2 = n/2,\\
    Rd(\sigma^{(n)}) &= R_{\te{min}}(\sigma^{(n)},\sigma^{(0)}) = \alpha. 
\end{align*}
In the minimum resistance path from $\sigma^{(0)}$ to $\sigma^{(n)}$,  presented in Fig. \ref{fig:ring}, the resistance of the first step is 
\[
R(\sigma^{(0)},\sigma^{(1)}) = U^{\te{NPG}}_i(0,\sigma^{(0)}_{-i}) - U^{\te{NPG}}_i(0,\sigma^{(1)}_{-i}) = 1. 
\]
For the subsequent $n-2$ steps, the resistance is $1/2$, which leads to the above expression for $Rd(\sigma^{(0)})$. The minimum resistance path between $\sigma^{(n)}$ and $\sigma^{(0)}$ includes only one noisy action of resistance $\alpha$. 

Since we have only two equilibria, $CR(\sigma^{(n)}) = Rd(\sigma^{(0)})$ and $CR(\sigma^{(0)}) = Rd(\sigma^{(n)})$. Thus, from (\ref{eq:RdCR}),  if
\[
 \alpha > \alpha_{\te{th}} = n/2 \text{ then } Rd(\sigma^{(n)}) > CR(\sigma^{(0)}),
\]
then $\sigma^{(n)}$, the NE in which all the players participate, is stochastically stable

\subsubsection{Wheel Network} In the wheel network of $n$ players, a central player is added to the ring network connected with all the players. When the players are connected in the wheel topology, the participation game will again have two Nash Equilibria, i.e., either no player participates ($\sigma^{(0)}$)  or all the players participate ($\sigma^{(n)}$). The minimum resistance paths between the two equilibria are presented in Fig. \ref{fig:wheel}. The Radius of equilibria $\sigma^{(0)}$ and $\sigma^{(n)}$ are
\begin{align*}
    Rd(\sigma^{(0)}) &= R_{\te{min}}(\sigma^{(0)},\sigma^{(n)}) = \frac{1}{3}+ \frac{2}{n-1} ,\\
    Rd(\sigma^{(n)}) &= R_{\te{min}}(\sigma^{(n)},\sigma^{(0)}) = 2\alpha. 
\end{align*}
For two Nash Equilibria, $CR(\sigma^{(n)}) = Rd(\sigma^{(0)})$ and $CR(\sigma^{(0)}) = Rd(\sigma^{(n)})$. Therefore, if 
\begin{equation}\label{eq:RdCR_wheel}
    \alpha > \alpha_{\te{th}} ~=~ \frac{1}{n-1} + \frac{1}{6},
\end{equation}
then $\sigma^{(n)}$ is stochastically stable. Otherwise, $\sigma^{(0)}$ is stochastically stable. Here, $\alpha_{\te{th}}$ represents the threshold value for the benefit that a player must receive for the participation to be considered stochastically stable.

Notice the difference in the condition on $\alpha$ between ring and wheel networks due to the addition of the central player. In the ring network, the payoff increases with the number of players $n$, making $\sigma^{(0)}$ stochastically stable unless $\alpha$ is very large. In the wheel network, however, $\alpha$ depends on a constant factor and a term inversely related to $n$, with the central player's influence playing a pivotal role. Normalizing the utility by $|\mc{N}_i|$ further reduces the central player's resistance to selecting noisy actions, encouraging exploration as the number of neighbors increases.

\subsubsection{Two-dimensional Grid}  An $m\times m$ grid network has $4$ corner nodes with two neighbors, $4(m-2)$ boundary nodes with three neighbors and $(m-2)^2$ internal nodes with four neighbors. When players are arranged in a grid configuration and $k=2$, the participation game has multiple Nash Equilibria. All the configurations in Figs. \ref{fig:2DGrid} and \ref{fig:CR_2DGrid} are Nash equilibria for the $7\times 7$ grid network.  


Using the $Rd-CR$ analysis, we derive the condition on $\alpha$ that will result in $\sigma^{(n)}$ as the stochastically stable profile. Here $n = m^2$ is the total number of nodes/players in the network. Starting from an $m\times m $ grid of participating players, as shown in Fig. \ref{fig:2DGrid}(d) for $m = 7$, six players (top six nodes in the second column from the left) must leave the network and play noisy actions to reach the next NE. Note that this configuration is one of the many possibilities of leaving $\sigma^{(n)}$. However, all the possibilities involve six noisy actions if $m = 7$. For an $m\times m$ grid, this observation generalizes to $(m-1)$ noisy actions. Thus, 
\[
Rd(\sigma^{(n)}) = (m-1) \alpha.
\]

The CoRadius of $\sigma^{(n)}$ is calculated from the minimum resistance path from $\sigma^{(0)}$ to $\sigma^{(n)}$ as shown in Fig. \ref{fig:2DGrid}. This path comprises a transition from the configuration in Fig. \ref{fig:2DGrid}(a) to (b) with a resistance of one. The first node experiences a resistance of $1/2$, the next two neighboring nodes experience a resistance of $1/4$, and the fourth node transitions with zero resistance.  Then, a transition from Fig. \ref{fig:2DGrid}(b) to (c) that requires five nodes to select noisy actions for $m = 7$ and $m-2$ noisy actions in general. Of these $m-2$ players selecting noisy actions, the two players on the boundary will face a resistance of $1/3$, and the remaining $m-4$  internal players will have a resistance of $1/4$. Finally, the transition from Fig. \ref{fig:2DGrid}(c) to (d) requires one noisy action per row for the remaining $m-2$ rows with zero participating players. Within these $m-2$ players, again, the internal $m-4$ players will have a resistance of $1/4$, and the two boundary players will have a resistance of $1/3$. Combining all these resistances, we get
\[
CR(\sigma^{(n)}) = 1 + 2 \frac{(m-4)}{4} + 2 \left(\frac{1}{3} + \frac{1}{3}\right) = \frac{7}{3} + \frac{1}{2} (m-4)
\]
for $m \geq 4$. If 
\begin{equation}\label{eq:RdCR_grid_upper}
    \alpha > \alpha_{\te{th}}^u = \frac{1}{m-1}\left(\frac{7}{3} + \frac{1}{4} (m-4)\right),
\end{equation}
then $\sigma^{(n)}$ is stochastically stable. 

Finally, we derive the condition for $\sigma^{(0)}$ to be stochastically stable. The Radius of $\sigma^{(0)}$ is the path resistance from the configuration in Fig. \ref{fig:2DGrid}(a) to \ref{fig:2DGrid}(b), which is equal to
$
Rd(\sigma^{(0)}) = 1.
$
For the CoRadius of $\sigma^{(0)}$, Fig. \ref{fig:CR_2DGrid} presents a minimum resistance path from $\sigma^{(n)}$ to $\sigma^{(0)}$ with resistance
\begin{equation}\label{eq:CR_grid}
CR(\sigma^{(0)}) = (m-1)\alpha + 2\alpha\left( \sum_{y\in \{3,5,\ldots,m-2\}}(m-y) \right).
\end{equation}
Thus, if 
\begin{equation}\label{eq:RdCR_grid_lower}
    \alpha < \alpha_{\te{th}}^{l} = CR(\sigma^{(0)}),
\end{equation}
where $CR(\sigma^{(0)})$ is defined in (\ref{eq:CR_grid}), then $\sigma^{(0)}$ is stochastically stable. We want to reiterate that the $Rd-CR$ analysis yields sufficient conditions for a stochastically stable profile. Therefore, for $\alpha_{\te{th}}^l < \alpha < \alpha^u_{\te{th}}$, $\sigma^{(0)}$ and $\sigma^{(n)}$ can still be stochastically stable. 
\subsection{Simulations}
We conducted extensive simulations to validate the stochastic stability results for wheel and grid networks. The line network has a unique Nash equilibrium in which no users participate. In the ring network, there are two Nash equilibria. However, the threshold condition on $\alpha$ is directly proportional to the number of agents in the network. For instance, even for a small network of $n = 10$ agents, $\alpha$ must exceed $5$ for $\sigma^{(n)}$ to be stochastically stable. Given that the results for these two configurations are straightforward, we did not include their simulation results.

For the wheel and grid networks, we validated the conditions on $\alpha$ as outlined in Eqs. (\ref{eq:RdCR_wheel}), (\ref{eq:RdCR_grid_upper}), and (\ref{eq:RdCR_grid_lower}). The simulation results in Fig. \ref{fig:RDCR_wheel_grid} support our theoretical predictions. We simulated a wheel network with $n = 20$ players under two different scenarios: one where $\alpha$ was set to $\alpha_{\te{th}} + \epsilon$ and the other where $\alpha$ was set to $\alpha_{\te{th}} - \epsilon$, with $\alpha_{\te{th}}$ representing the threshold value derived from Eq. (\ref{eq:RdCR_wheel}), and $\epsilon = 0.1$. Each scenario was run 100 times, and the average results are displayed in Fig. \ref{fig:RDCR_wheel_grid} (left). The figure clearly shows that when $\alpha$ exceeds the threshold value, full participation by all players becomes stochastically stable, whereas when $\alpha$ falls below the threshold, non-participation becomes the stochastically stable outcome.

We applied the same analysis on a $5 \times 5$ grid, setting $\epsilon = 0.01$ and the noise parameter $T = 0.1$. The results, depicted in Fig. \ref{fig:RDCR_wheel_grid} (right), demonstrate that when $\alpha = \alpha_{\te{th}}^u + \epsilon$, the Nash equilibrium $\sigma^{(n)}$, where all players participate, is stochastically stable. Conversely, when $\alpha = \alpha_{\te{th}}^l - \epsilon$, the Nash equilibrium $\sigma^{(0)}$, where no player participates, is stochastically stable. These findings validate the theoretical predictions regarding the stability of network configurations under varying conditions of $\alpha$. 

What happens for the values of $\alpha$ between $\alpha_{\te{th}}^l$ and $\alpha_{\te{th}}^u$? Given that the network participation game has multiple Nash equilibria in a grid network, could other equilibria be stochastically stable under certain conditions? To explore this question, we simulated the grid network for intermediate values of $\alpha$. As noted earlier, the $Rd$-$CR$ analysis only provides a sufficient condition for stochastic stability, leaving open the possibility that the same equilibria ($\sigma^{(0)}$ and $\sigma^{(n)}$) could be stable at these intermediate $\alpha$ values. We considered the following values of $\alpha$: $(0.125,~0.225,~0.325,~0.425,~0.525,~0.625)$, where $\alpha_{\te{th}}^l = 0.125$ and $\alpha_{\te{th}}^u = 0.625$. The results, shown in Fig. \ref{fig:grid_combined}, reveal that only the extreme profiles, $\sigma^{(0)}$ and $\sigma^{(n)}$, are stochastically stable, even for intermediate values of $\alpha$. Specifically, $\sigma^{(0)}$ remains stochastically stable for $\alpha = 0.125$ and $0.225$. For $\alpha = 0.325$, network participation oscillated between $\sigma^{(0)}$ and $\sigma^{(n)}$. However, for $\alpha=0.425$ and onwards, $\sigma^{(n)}$ was stochastically stable. Our simulations indicate that the remaining Nash equilibria are not stochastically stable at the intermediate values of $\alpha$ as well.
\begin{figure}
    \centering
    \includegraphics[width=0.46\linewidth]{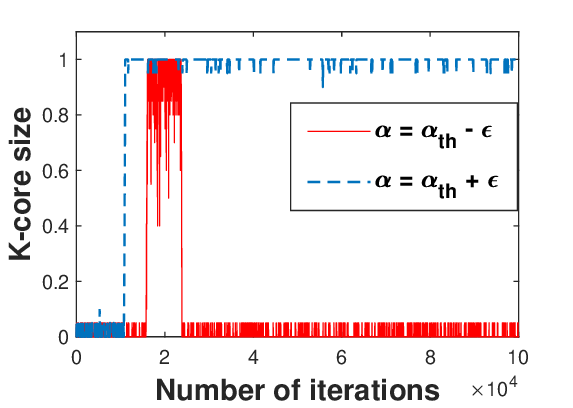}
    \includegraphics[width=0.46\linewidth]{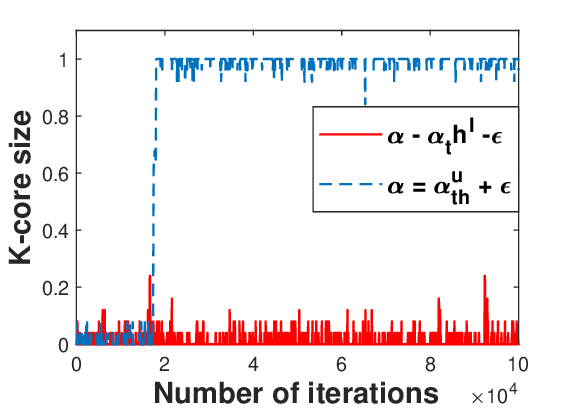}
    \caption{Simulation results of $Rd-CR$ analysis of Nework participation game for wheel (left) and grid (right) networks. For the wheel network (left), $n = 20$, $T = 0.042$, and $\epsilon = 0.1$. For the $5\times5$ grid network, $T = 0.1$ and $\epsilon = 0.01$.}
    \label{fig:RDCR_wheel_grid}
\end{figure}
\begin{figure*}[t]
    \centering
    \subfigure[$\alpha = 0.125$]{
        \includegraphics[scale=0.31]{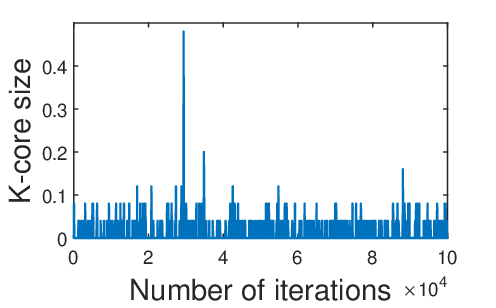} 
        \label{fig:grid_1}
    }   
    \subfigure[$\alpha = 0.225$]{
        \includegraphics[scale=0.31]{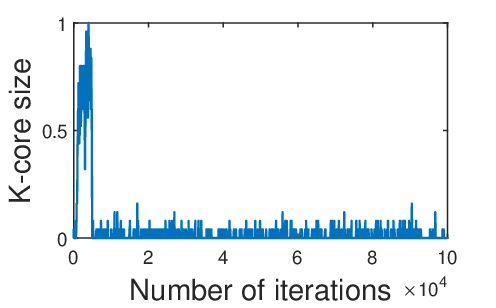} 
        \label{fig:grid_1}
    }    
    \subfigure[$\alpha = 0.325$]{
        \includegraphics[scale=0.31]{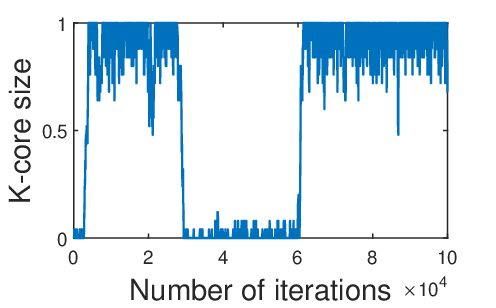} 
        \label{fig:grid_1}
    }      
    \subfigure[$\alpha = 0.425$]{
        \includegraphics[scale=0.31]{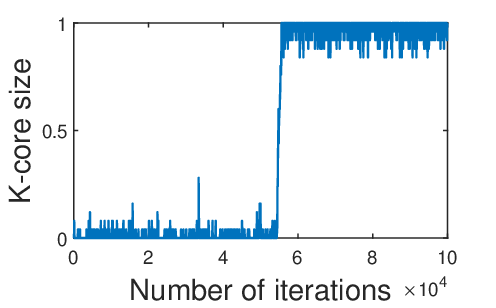} 
        \label{fig:grid_1}
    }   
    \subfigure[$\alpha = 0.525$]{
        \includegraphics[scale=0.31]{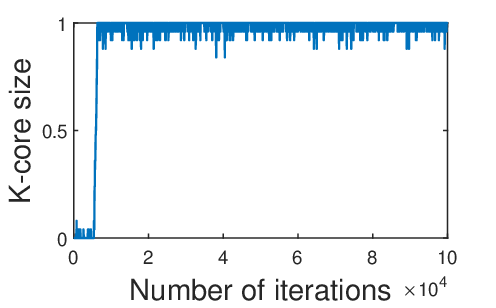} 
        \label{fig:grid_1}
    }   
    \subfigure[$\alpha = 0.625$]{
        \includegraphics[scale=0.31]{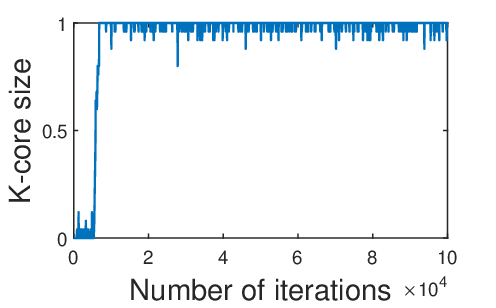} 
        \label{fig:grid_1}
    }   
    \caption{Simulation results for grid networks for $\alpha_{\te{th}}^l \leq \alpha \leq \alpha_{\te{th}}^u$}
    \label{fig:grid_combined}
    \end{figure*}
\section{Game Design to Maximize User Engagement}\label{sec:InfluenceMaximization}
In the preceding sections, we introduced a game-theoretic framework to model user engagement in network participation and resource-sharing problems. We represented players as boundedly rational decision-makers and illustrated that our framework captured the diffusion and non-diffusion dynamics of network evolution. 
\emph{In this section, we approach this setup from a system designer's perspective, aiming to maximize user participation in the network activity}. 
The standard approach in the literature involves selecting anchor nodes; players incentivized to participate in network activities regardless of their payoffs. The problem of computing an optimal set of anchor nodes is an NP-hard combinatorial optimization problem. Therefore, several greedy heuristics or simulated annealing-based algorithms have been proposed to solve this problem. \emph{We propose a real-time approach for selecting anchor nodes and quantifying each node's impact on system performance, which can be used to compute the incentives for that node. } 
%
\subsection{Network Participation: System Design Perspective}
For the network participation game, we present the following optimization problem to maximize user participation:
\begin{equation}\label{eq:opt_problem}
    \max_{\sigma \in \mc{A}}~\Phi^{\te{NPG}} (\sigma) = \max_{\sigma \in \mc{A}}~ \sum_{i = 1}^n U_i^{\te{NPG}}(\sigma_{i},\sigma_{-i}),
\end{equation}
where $U_i^{\te{NPG}}(\sigma)$ represents the utility that players receive from participating in cooperative activities. Players disengage if the number of their participating neighbors falls below the threshold value of $k$. Therefore, to maximize $\Phi^{\te{NPG}}(\sigma)$, all users with at least $k$ participating neighbors must engage. Conversely, players with fewer than $k$ neighbors will contribute negatively to the global objective if they participate. However, if such players can help others exceed their participation threshold, the cost of their participation will be outweighed by the net benefit they contribute to the global objective. Incentivizing these players can further boost overall user participation in the network.

\begin{thm}\label{thm:MCU}
Consider the network participation game with $n$ players. We define the utility function of each player as
\begin{equation}\label{eq:U_i_global}
U^{g,\te{NPG}}_i(\sigma_i,\sigma_{-i}) = 
\left\lbrace
\begin{array}{cc}
 U^{\te{NPG}}_i(\sigma) + I^{\te{NPG}}_i(\sigma)    &   \sigma_i = 1, \\
 0 & \sigma_i = 0,
\end{array}
\right.
\end{equation}
where $U^{\te{NPG}}_i(\sigma)$ is defined in (\ref{eq:Util-k}) and $I^{\te{NPG}}_i(\sigma)$ is 
\begin{equation}\label{eq:I_i}
    I^{\te{NPG}}_i(1,\sigma_{-i}) = \sum_{j \in \mc{N}^{p,\te{th}}_i(\sigma)} \left(\alpha_j + \frac{1}{|\mc{N}_j|}\right) +\sum_{j \in \mc{N}^{p,\te{ex}}_i(\sigma)}  \frac{1}{|\mc{N}_j|}. 
\end{equation}
Here, 
\begin{align*}
\mc{N}^{p,\te{th}}_i(\sigma) &= \{j \in \mc{N}^p_i(\sigma)~|~ |\mc{N}^p_j(\sigma)| = k\}, \text{ and }\\
\mc{N}^{p,\te{ex}}_i(\sigma) &= \{l \in \mc{N}^p_i(\sigma)~|~ |\mc{N}^p_l(\sigma)| < k\}.
\end{align*}
Then, the resulting game is an exact potential game with a potential function $\Phi^{\te{NPG}}(\sigma)$ in (\ref{eq:opt_problem}). Moreover, if all the players update their actions using Log-Linear Learning, then the set of stochastically stable action profiles ${\te{SS}}^{\te{NPG}}$ is the set of  action profiles where potential function is maximum, i.e., 
\[
{\te{SS}^{\te{NPG}}} = \argmax_{\sigma \in \mc{A}} \sum_{i=1}^n U^{\te{NPG}}_i.
\]
\end{thm}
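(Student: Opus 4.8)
The plan is to establish the exact potential game property by a direct verification of the defining identity (\ref{eq:PotentiaGame}), and then to invoke the standard Log-Linear Learning result for potential games quoted in Section \ref{sec:Setup} to characterize the stochastically stable profiles. Since every player has the binary action set $\{0,1\}$, the only nontrivial action pair to check in (\ref{eq:PotentiaGame}) is $\sigma_i' = 0$ versus $\sigma_i = 1$; the reverse direction follows by negation. Fixing $\sigma_{-i}$ (so that the set of participating neighbors $\mc{N}^p_i(\sigma)$ is held constant, as it depends only on $\sigma_{-i}$), the left-hand side of (\ref{eq:PotentiaGame}) for the modified utilities is simply
\[
U^{g,\te{NPG}}_i(1,\sigma_{-i}) - U^{g,\te{NPG}}_i(0,\sigma_{-i}) = U^{\te{NPG}}_i(1,\sigma_{-i}) + I^{\te{NPG}}_i(1,\sigma_{-i}),
\]
because $U^{g,\te{NPG}}_i(0,\sigma_{-i}) = 0$ by definition.

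Next I would compute the right-hand side, namely the change in the potential $\Phi^{\te{NPG}}(\sigma) = \sum_j U^{\te{NPG}}_j(\sigma)$, which is built from the \emph{original} (un-incentivized) utilities, as $i$ switches from $0$ to $1$. The summand $j = i$ contributes exactly $U^{\te{NPG}}_i(1,\sigma_{-i})$. For $j \neq i$, only participating neighbors $j \in \mc{N}^p_i(\sigma)$ can register a nonzero change, since non-neighbors are unaffected and non-participating players retain utility $0$. The heart of the argument is a three-way case split on $|\mc{N}^p_j(\sigma)|$, the number of participating neighbors of such a $j$ counted \emph{with} $i$ participating. If $|\mc{N}^p_j(\sigma)| = k$ (that is, $j \in \mc{N}^{p,\te{th}}_i(\sigma)$), then $j$ had $k-1$ participating neighbors before $i$ joined, so its utility jumps from $-1/|\mc{N}_j|$ to $\alpha_j$, contributing $\alpha_j + 1/|\mc{N}_j|$. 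If $|\mc{N}^p_j(\sigma)| < k$ (that is, $j \in \mc{N}^{p,\te{ex}}_i(\sigma)$), then $j$ stays in the sub-threshold linear regime and gains precisely $1/|\mc{N}_j|$. If $|\mc{N}^p_j(\sigma)| > k$, then $j$ remains saturated at $\alpha_j$ both before and after and contributes $0$. Summing these contributions reproduces exactly $I^{\te{NPG}}_i(1,\sigma_{-i})$, so the right-hand side equals $U^{\te{NPG}}_i(1,\sigma_{-i}) + I^{\te{NPG}}_i(1,\sigma_{-i})$, matching the left-hand side and establishing the exact potential game claim.

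For the stochastic stability statement, the game is now an exact potential game with potential $\Phi^{\te{NPG}}$, so the result cited in Section \ref{sec:Setup} (for potential games under LLL, the stochastically stable profiles coincide with the potential maximizers) immediately yields
\[
\te{SS}^{\te{NPG}} = \argmax_{\sigma \in \mc{A}} \Phi^{\te{NPG}}(\sigma) = \argmax_{\sigma \in \mc{A}} \sum_{i=1}^n U^{\te{NPG}}_i(\sigma).
\]

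I expect the main obstacle to be the bookkeeping in the case analysis rather than any deep difficulty: one must pin down the convention that $|\mc{N}^p_j(\sigma)|$ in the definitions of $\mc{N}^{p,\te{th}}_i$ and $\mc{N}^{p,\te{ex}}_i$ is evaluated \emph{with $i$ already participating}, and carefully separate the discrete jump at the threshold from the linear regime below it, so that the term-by-term match with $I^{\te{NPG}}_i$ is exact. Conceptually, the key insight worth emphasizing is that $I^{\te{NPG}}_i$ is precisely the correction that internalizes the externality player $i$ imposes on its participating neighbors; this is exactly what forces the \emph{original} social welfare $\sum_i U^{\te{NPG}}_i$ (not the incentivized welfare) to serve as the potential, and hence to be the quantity maximized at the stochastically stable profiles, which is the system designer's objective.
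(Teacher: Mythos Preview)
Your proposal is correct and coincides with the paper's proof in substance: both compute $\Phi^{\te{NPG}}(1,\sigma_{-i}) - \Phi^{\te{NPG}}(0,\sigma_{-i})$, decompose the neighbor contributions via the same three-way case split on $|\mc{N}^p_j(\sigma)|$ relative to $k$, and then invoke the standard LLL result for exact potential games. The only cosmetic difference is framing: the paper packages the computation as deriving the Marginal Contribution Utility (with baseline action $0$) from $\Phi^{\te{NPG}}$ and matching it to $U^{g,\te{NPG}}_i$, whereas you verify the potential identity (\ref{eq:PotentiaGame}) directly; for binary actions these are the same calculation.
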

\begin{proof}
We start with the global function and derive the utility of individual players using the Marginal Contribution Utility (MCU) approach presented in \cite{wolpert1999introduction}. Then, the resulting game will be an exact potential game with the global objective function as a potential function for that game (see the details in \cite{Marden12}). 

From the global objective function defined in (\ref{eq:opt_problem}), the Marginal Contribution Utility of a player is 
\[
U^{\te{MCU}}_i(\sigma_i,\sigma_{-i}) = \Phi^{\te{NPG}}(\sigma_i,\sigma_{-i}) - \Phi^{\te{NPG}}(\sigma^{\te{bl}}_i,\sigma_{-i}),
\]
where $\sigma^{\te{bl}}_i$ is a baseline action for $i$. Thus, the MCU of player $i$ for an action $\sigma_i$, given the actions of other players $\sigma_{-i}$, is the difference in the global function when player $i$ switches from its baseline action to $\sigma_i$. For the network participation game, we consider $\sigma^{\te{bl}}_i = 0$ for all the players, i.e., the baseline action for a player is not to participate. Then, the utility of a player for participating in the network activity is 
\[
U^{\te{MCU}}_i(1,\sigma_{-i}) = \Phi^{\te{NPG}}(1,\sigma_{-i}) -\Phi^{\te{NPG}}(0,\sigma_{-i}). 
\]
The first term on the right-hand side is the global function when player $i$ is participating, and the second term is the function when $i$ is not participating. We get the following expression by replacing the definition of the potential function $\Phi^{\te{NPG}}$ from (\ref{eq:opt_problem}) in the above equation. 
\begin{align*}
    U^{\te{MCU}}_i( 1, \sigma_{-i})&= \sum_{j=1}^n\left[U^{\te{NPG}}_j(1, \sigma_{-i}) - U^{\te{NPG}}_j(0, \sigma_{-i}) \right], \\
    &= [U^{\te{NPG}}_i(1,\sigma_{-i}) - U^{\te{NPG}}_i(0,\sigma_{-i}] + 
     \\
    \sum_{ j \in \mc{N}^p_i(\sigma)}& [U^{\te{NPG}}_j(1,\sigma_{-i}) - U^{\te{NPG}}_j(0,\sigma_{-i})].
\end{align*}
For the second equality above, we used the facts that player $i$'s actions do not impact the non-neighboring players of $i$, and $U^{\te{NPG}}_j(0,\sigma_{-i}) = 0$. Thus, we only have to consider the impact of $i$'s actions on the participating neighbors of $i$. 

We can divide the players in $\mc{N}^p_i(\sigma)$ into three categories. Players for which the number of participating neighbors is at least one higher than the threshold value. Such players will not be impacted by player $i$'s decisions because their participation threshold is satisfied even without player $i$. The second category of players in $\mc{N}^p_i(\sigma)$ consists of those players for which the number of participating neighbors is exactly equal to their threshold. These players will suffer a loss in utility without player $i$. Let $\mc{N}^{p,\te{th}}(\sigma)$ be the subset of such participating players. Thus, for all $j \in \mc{N}^{p,\te{th}}(\sigma)$
\[
  U^{\te{NPG}}_j(1,\sigma_{-i}) - U^{\te{NPG}}_j(0,\sigma_{-i}) =  \left(\alpha + \frac{1}{|\mc{N}_j|}\right). 
\]
The third category of players in $\mc{N}^p_i(\sigma)$ is the set of players whose number of participating neighbors is less than their threshold. Their decision to participate was exploratory, and they incurred a negative utility. Let $\mc{N}^{p,\te{ex}}_i(\sigma)$ be the subset of such players. Then, for all $l \in \mc{N}^{p,\te{ex}}(\sigma)$
\[
U^{\te{NPG}}_l(1,\sigma_{-i}) - U^{\te{NPG}}_l(0,\sigma_{-i}) =  \frac{1}{|\mc{N}_l|} .
\]
Combining the above information results in the following utility for player $i$. 
\begin{equation*}
    U^{\te{MCU}}_i(1,\sigma_{-i}) = U^{\te{NPG}}_i(1,\sigma_{-i}) + I^{\te{NPG}}_i(1,\sigma_{-i})
\end{equation*}
where $I^{\te{NPG}}_i(1,\sigma_{-i})$ is as defined in (\ref{eq:U_i_global}).
%
Thus, the global utility proposed in (\ref{eq:U_i_global}) is the Marginal Contribution Utility derived in the above expression. 

Since $U^{g,\te{NPG}}_i(\sigma)$ is derived from $\Phi^{\te{NPG}}(\sigma)$ according to the Marginal Contribution Utility approach, the global function $\Phi^{\te{NPG}}(\sigma)$ is the potential function for the resulting game by definition as shown in \cite{marden2013distributed}. Finally, it was proved in \cite{Marden12} and \cite{alos10} that for exact potential games, if players update their actions according to Log-Linear Learning, then the set of stochastically stable profiles is the same as the set of potential maximizing action profiles, which concludes the proof. 
 \end{proof}
 
The first component of the updated utility function $U^{g, \te{NPG}}_i(1,\sigma_{-i})$ in (\ref{eq:U_i_global}) is $U^{\te{NPG}}_i(1,\sigma_{-i})$, which is the payoff received by player $i$ for participating in the cooperative activity. The second component is $I^{\te{NPG}}_i(1,\sigma_{-i})$, which represents the impact of player $i$ on the payoffs of its neighbors. 
The function $I^{\te{NPG}}_i(\sigma)$ quantifies the contribution of $i$ towards maximizing user participation in the network activity. 
\begin{itemize}
\item The first term in $I^{\te{NPG}}_i(1,\sigma_{-i})$  represents the number of participating neighbors of $i$ that will stop participating if player $i$ leaves, which can initiate a cascade of withdrawals.  Therefore, this term is a direct measure of the significance of player $i$ in enhancing the network activity. 
\item The second term in $I_i(1,\sigma_{-i})$  reflects the impact of  $i$ in supporting neighbor's exploratory behavior. If player $i$ decides to leave, each of the neighboring players who were exploring to start participating will be negatively impacted, and this impact is quantified in the second term. 
\end{itemize} 

The utility function in Theorem \ref{thm:MCU} derived using the Marginal Contribution Utility, along with the resulting potential game, forms the foundation of our real-time anchor selection strategy. By leveraging this game setup, we present a novel approach that enables a system designer to select anchor nodes in real time. Specifically, the function $I_i^{\te{NPG}}(\sigma)$ quantifies a player's potential influence on their neighbors' decisions, providing crucial information for determining whether a player should be incentivized as an anchor node.
\subsection{Anchor Node Selection}
We introduce a Principal Agent (PA) into our framework whose objective is to drive the system towards an optimal solution to the optimization problem in (\ref{eq:opt_problem}) by incentivizing a subset of players. The action set of the PA is  $\mc{A}^{\te{PA}} = \{0,1\}^n$, representing a set of $n$-dimensional vectors with binary entries. For any vector $\sigma^{\te{PA},\te{NPG}} \in \mc{A}^{\te{PA}}$, the element $\sigma^{\te{PA},\te{NPG}}_i$ denotes the status of player $i$, where $\sigma^{\te{PA},\te{NPG}}_i = 1$ indicates that the Principal Agent has selected player $i$ as an anchor node. The nodes designated as anchors are incentivized by the PA to participate based on their contributions to network activity. We incorporate this incentivization factor into the players' utility functions as follows:
\begin{equation*}\label{eq:Util_PA}
U^{\te{PA},\te{NPG}}_i(\sigma) = U^{\te{NPG}}_i(\sigma) + \sigma^{\te{PA},\te{NPG}}_i I^{\te{NPG}}_i(\sigma).    
\end{equation*}

The utility that players receive from their impact on neighbors is now linked to their anchor status, represented by $\sigma^{\te{PA},\te{NPG}}_i$. If the PA selects player $i$ as an anchor node, then $\sigma^{\te{PA},\te{NPG}}_i = 1$, and the player receives additional utility for participating. In contrast, for a regular player, where $\sigma^{\te{PA},\te{NPG}}_i = 0$, player $i$'s utility is solely determined by $U^{\te{NPG}}_i(\sigma)$. Thus, 
\begin{equation}\label{eq:U_PA_NPG}
  U^{\te{PA},\te{NPG}}_i(\sigma_i,\sigma_{-i}) = 
\left\lbrace
\begin{array}{cc}
 U^{g,\te{NPG}}_i(\sigma)    &\sigma^{\te{PA},\te{NPG}}_i = 1, \\
  U^{\te{NPG}}_i(\sigma)   & \sigma^{\te{PA},\te{NPG}}_i = 0.  
\end{array}
\right.  
\end{equation}
Here $U^{\te{NPG}}_i(\sigma)$ and $U^{g,\te{NPG}}_i(\sigma)$ are player utilities defined in (\ref{eq:U_PA_NPG}) and (\ref{eq:U_i_global}), respectively. This setup raises the question of PA's strategy for selecting anchor nodes. We present two methods for the Principal Agent to select anchor nodes. 
\subsubsection{Unconstrained Anchor Selection}
In this scenario, the PA has no restrictions on the number of anchor nodes that can be incentivized. Initially, the PA sets $\sigma^{\te{PA},\te{NPG}}_i = 0$ for all players $i \in {1, 2, \ldots, n}$, indicating that no players are incentivized at the outset. The system follows LLL where, at each iteration, $t$, a single player $i$ is randomly chosen for action revision, while all other players maintain their actions from the previous iteration, $t-1$. Based on the following criteria, the PA decides whether to offer anchor node status to player $i$. 
\begin{equation}\label{eq:PA_decision_unconstrained}
    \sigma^{\te{PA},\te{NPG}}_i(\sigma) = 
\left\lbrace
\begin{array}{ll}
 1    & U^{\te{NPG}}_i(\sigma) \leq 0 \text{ and } U^{g,\te{NPG}}(\sigma)> 0, \\
  0   & \text{Otherwise}.  
\end{array}
\right.
\end{equation}
Thus, a player is selected as an anchor node if the payoff $U^{\te{NPG}}_i(\sigma)$ is negative but their influence $I^{\te{NPG}}_i(\sigma)$ on neighbors is significant, leading to a positive value of $U_i^{g,\te{NPG}}(\sigma)$.

Once the PA determines the status of player $i$, the player decides participating in the network activity using the probabilistic decision rule of LLL, described in (\ref{eq:pi_LLL}), with the utility function $U_i^{\te{PA},\te{NPG}}(\sigma)$. If the PA sets $\sigma^{\te{PA},\te{NPG}}_i = 1$, but the player opts not to participate due to exploratory or noisy action, the PA resets $\sigma^{\te{PA},\te{NPG}}_i$ to zero. This adjustment conserves PA's budget, incentivizing only actively participating nodes.

\emph{A key feature of our approach is the dynamic reevaluation of anchor nodes}. Suppose that at some time $t_1$, if player $i$'s benefit was below the threshold but they contributed to increasing network participation, the PA designated them as an anchor. However, at a later time $t_2 > t_1$ when player $i$ receives another revision opportunity, if $i$'s participating neighbors increase, making $U^{\te{NPG}}_i(\sigma) > 0$, the PA will reset $\sigma^{\te{PA},\te{NPG}}_i$ to zero. Similarly, if player $i$ decides to disengage at time $t_2$ as a noisy action, the anchor status will be revoked. 

\subsubsection{Budget Constrained Anchor Selection}\label{subsection:budgetNPG}
In this scenario, the PA is constrained by a budget on the number of anchor nodes. Let $b$ be the maximum number of anchor nodes the PA can select. Then,  the PA's restricted action set is $
\mc{A}^{\te{PA}} = \{\sigma^{\te{PA},\te{NPG}} \in \{0,1\}^n ~|~ n_a \leq b\},
$
where $n_a = \sum_{i=1}^n \sigma^{\te{PA},\te{NPG}}_i$ is the number of active anchor nodes at that time. Initially, no player is incentivized. The PA maintains a dynamic set $\Gamma_t$, where each entry is a tuple $(i, I_i^{\te{NPG}}(\sigma(t-1)), t_{i}^a)$, with $t_i^a$ representing the number of intervals player $i$ has served as an anchor node. We propose that each anchor should receive the incentives $I_i^{\te{NPG}}$ for at least $t^{\te{th}}$ intervals. Let $\Gamma_t^{\geq,\te{th}}$ be the subset of $\Gamma_t$  consisting of anchors who have been receiving the incentive for more than $t^{\te{th}}$ intervals at time interval $t$, i.e., $t^a_i \geq ^{\te{th}}$.  

At each iteration $t$, when a random player $i$ receives a revision opportunity, the PA begins by evaluating the criteria outlined in (\ref{eq:PA_decision_unconstrained}) to determine whether the player qualifies as an anchor candidate. If the player meets the criteria, the PA checks the budget constraint. If the current number of active anchor nodes, $n_a$, is below the budget limit $b$, the PA assigns $\sigma^{\te{PA},\te{NPG}}_i = 1$, selecting player $i$ as an anchor. However, if $n_a \geq b$, meaning the maximum number of anchor nodes is already active, the PA assesses the potential contribution of player $i$, $I_i^{\te{NPG}}(\sigma(t-1))$, and compares it to the contributions of the current anchors in the list $\Gamma_t^{\geq,\te{th}}$. This list contains anchor nodes that have been receiving incentives for more than $t^{\te{th}}$ intervals. If player $i$'s contribution exceeds the lowest contribution among the anchors in $\Gamma_t^{\geq,\te{th}}$, then $\sigma^{\te{PA},\te{NPG}}_i = 1$, and player $i$ replaces the anchor with the smallest contribution. The PA also regularly reassesses the active anchor nodes' status in the set $\Gamma_t^{\geq,\te{th}}$ after every $t^u$ intervals. If any anchor $j$ in $\Gamma_t^{\geq,\te{th}}$ satisfies $U_j^{\te{NPG}} > 0$, the PA will revoke the anchor status for that player and will set $\sigma^{\te{PA},\te{NPG}}_j = 0$.  

After the PA's decision, the selected player $i$ chooses an action from their action set $\mc{A}_i$ based on the probability distribution defined by LLL process, as given in (\ref{eq:pi_LLL}). If the PA has previously set $\sigma^{\te{PA},\te{NPG}}_i(\sigma) = 1$, indicating that the player has been incentivized to engage, but the player instead selects a noisy action that results in not participating in the network activity, the PA will reset $\sigma^{\te{PA},\te{NPG}}_i(\sigma) = 0$, effectively withdrawing the incentive.

Thus, the proposed anchor selection strategy enables the PA to select anchor nodes in real time for unconstrained and budget-constrained scenarios. \emph{Another strength of our proposed strategy is a relaxed information requirement compared to the existing methods in \cite{bhawalkar2015preventing} and \cite{abbas2019graph}}. Computing a set of anchor nodes using greedy heuristics or simulated annealing before the start of the activity requires the PA to have complete knowledge of the underlying social network. However, in our approach, the PA only needs to know the neighborhood structure of player $i$ selected for action revision at time $t$ and the neighborhood information of $i$'s participating neighbors. Thus, the information requirements in our framework are more realistic and relaxed than those in the prior approaches. 
\subsection{Network Resource Sharing: System Design Perspective}
We extend our approach to maximizing user engagement in the resource-sharing problem setup. We follow the same procedure of starting with a global optimization problem and formulating the resource-sharing problem as a potential game. Once we have the potential game setup, the PA's strategies will be the same for the unconstrained and budget-constrained scenarios as presented for the network participation problem. 

In the resource-sharing problem. we propose the following global optimization problem:
\begin{equation}\label{eq:opt_prob_NSG}
    \max_{\sigma \in \mc{A}} \Phi ^{\te{NSG}}(\sigma) = \max_{\sigma \in \mc{A}} \sum_{i = 1}^n U_i^{\te{NSG}} (\sigma_i,\sigma_{-i}), 
\end{equation}
where $U_i^{\te{NSG}}(\sigma)$ is the player's utility for participating in the resource-sharing activity, as presented in (\ref{eq:U_rs}). Then, 
the Marginal Contribution Utility can be computed as follows:
\[
U^{g,\te{NSG}}_i(\sigma) =  \Phi ^{\te{NSG}}(1, \sigma_{-i}) - \Phi ^{\te{NSG}}(0,\sigma_{-i}).
\]
The revised utility function has the following form:
\begin{align}\label{eq:Ug_NSG}
    U^{g,\te{NSG}}_i(\sigma) &= U^{\te{NSG}}_i(\sigma) +I^{\te{NSG}}_i(\sigma), \\
    I^{\te{NSG}}_i(1,\sigma_{-i}) &= \sum_{j \in \mc{N}^{p,\te{th}}_i(\sigma)} \left(\alpha +  \frac{c_{ji}}{|L_i|}\right) +\sum_{j \in \mc{N}^{p,\te{ex}}_i(\sigma)}  \frac{c_{ji}}{|L_i|} \nonumber
\end{align}
with $c_{ji} = |L^p_j(1,\sigma_{-i})| - |L^p_j(0,\sigma_{-i})|$ and $U^{\te{NSG}}_i(\sigma)$ is player utility function defined in (\ref{eq:U_rs}) . Here, $|L^p_j(\sigma)|$ is defined in (\ref{eq:L}) and is equal to the number of unique resources that player $j$ can access. Therefore, $c_{ji}$ is the number of unique resources that $j$ can only access through $i$. 

In the above expression for $I^{\te{NSG}}_i(\sigma)$, the set $\mc{N}^{p,\te{th}}_i(\sigma)$ is the subset of neighbors of $i$ whose participation threshold is not satisfied without $i$. These players will suffer a loss of $\alpha + c_{ji}/r$ if $i$ disengages from the network. Similarly, the players in the set $\mc{N}^{p,\te{ex}}_i(\sigma)$ are those neighbors of $i$ who do not have access to the full set of $r$ resources. However, they still engage in the sharing activity as an exploratory behavior. Every member $j \in \mc{N}^{p,\te{ex}}_i(\sigma)$  will experience a loss of $c_{ji}/r$, where $c_{ji}$ is the number of resources that $j$ could access through $i$ only. Thus, the resource-sharing game with utility function in (\ref{eq:Ug_NSG}) is a potential game. Moreover, if players update their actions using LLL, then the stocahstically stable profile is the set of profiles that maximize the potential function $\Phi^{\te{NSG}}$ in (\ref{eq:opt_prob_NSG}). 

Next, we present the anchor node selection algorithm for the resource-sharing game. Since the utility function $U_i^{g,\te{NSG}}(\sigma)$ has the same structure as the $U_i^{\te{g,NPG}}(\sigma)$, we propose the same decision strategy for the Principal Agent (PA). We start with the unconstrained scenario in which there are no restrictions on the number of anchor nodes the PA can select. When a player is randomly selected to revise their strategy, the PA follows the following decision strategy. 
\begin{equation}\label{eq:PA_NSG}
    \sigma^{\te{PA},\te{NSG}}_i(\sigma) = 
\left\lbrace
\begin{array}{ll}
 1    & U^{\te{NSG}}_i(\sigma) \leq 0 \text{ and } U^{g,\te{NSG}},(\sigma)> 0, \\
  0   & \text{Otherwise}.  
\end{array}
\right.
\end{equation}
Similarly, for the unconstrained scenario, the PA will follow the same steps as outlined for the participation game. 
Thus, the utility function of player $i$ will be
\begin{equation}\label{eq:U_PA_NSG}
U^{\te{PA},\te{NSG}}_i(\sigma_i,\sigma_{-i}) = 
\left\lbrace
\begin{array}{cc}
 U^{g,\te{NSG}}_i(\sigma)    &\sigma^{\te{PA},\te{NSG}}_i = 1, \\
  U^{\te{NSG}}_i(\sigma)   & \sigma^{\te{PA},\te{NSG}}_i = 0.  
\end{array}
\right.    
\end{equation}
%
\begin{figure*}[t]
\centering
    \subfigure[$k = 7$, Network size = 781 nodes]{
    \includegraphics[width=0.32\textwidth]{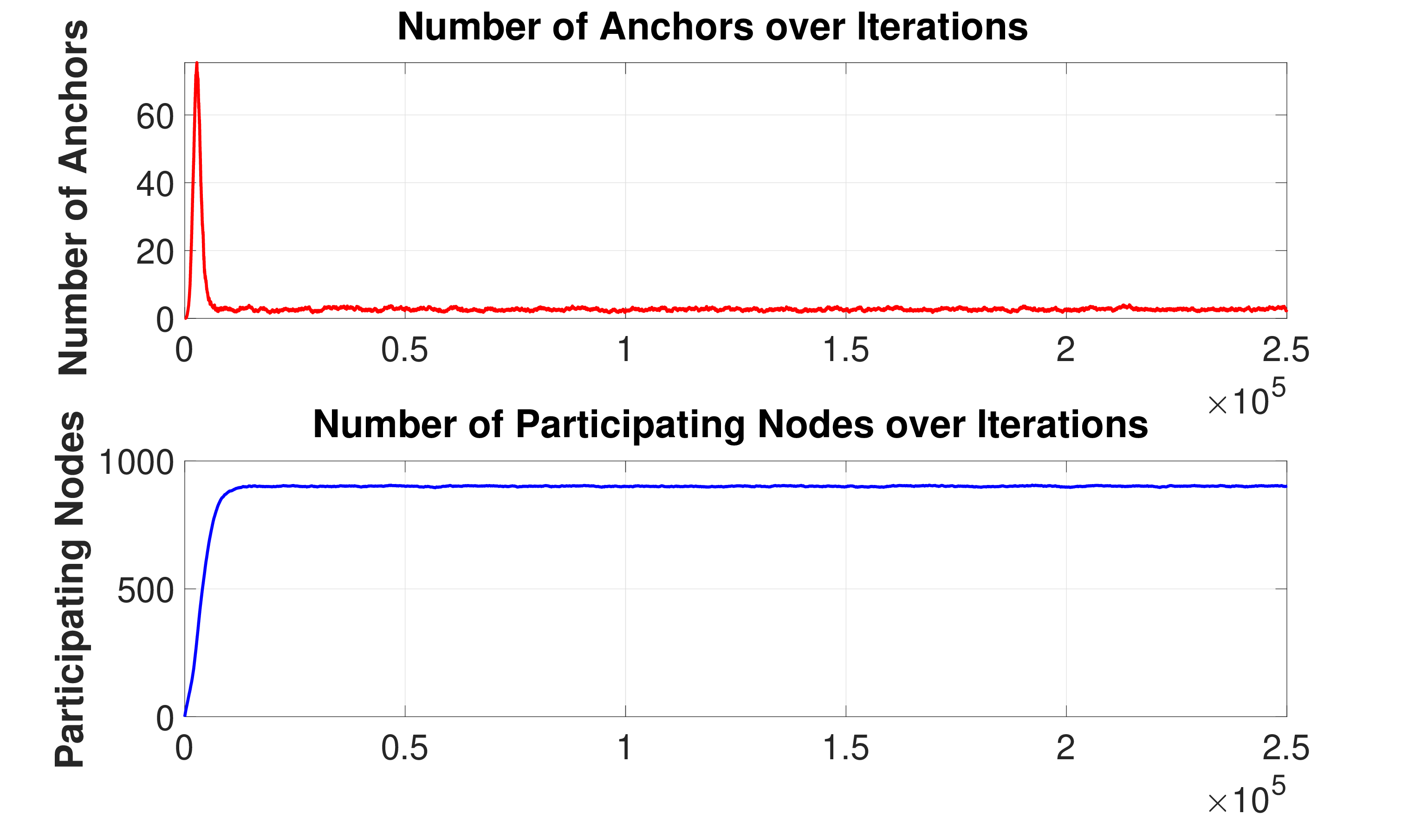}
    \label{fig:unconstrained_k=7}
    }
    \subfigure[$k = 8$, Network size = 687 nodes]{\includegraphics[width=0.32\textwidth]{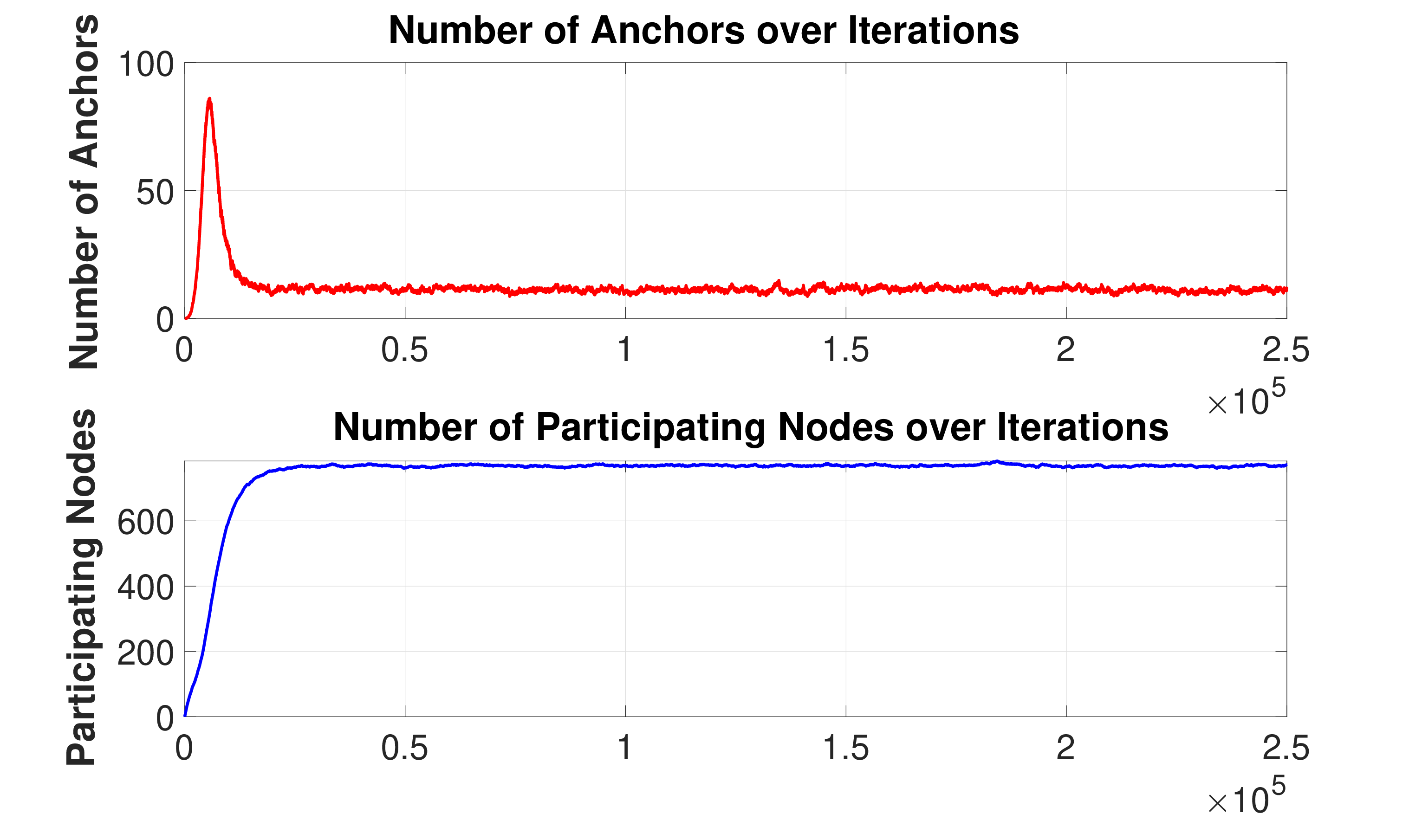}
    \label{fig:unconstrained_k=8}
    }
        \subfigure[$k = 9$, Network size = 354 nodes]{\includegraphics[width=0.32\textwidth]{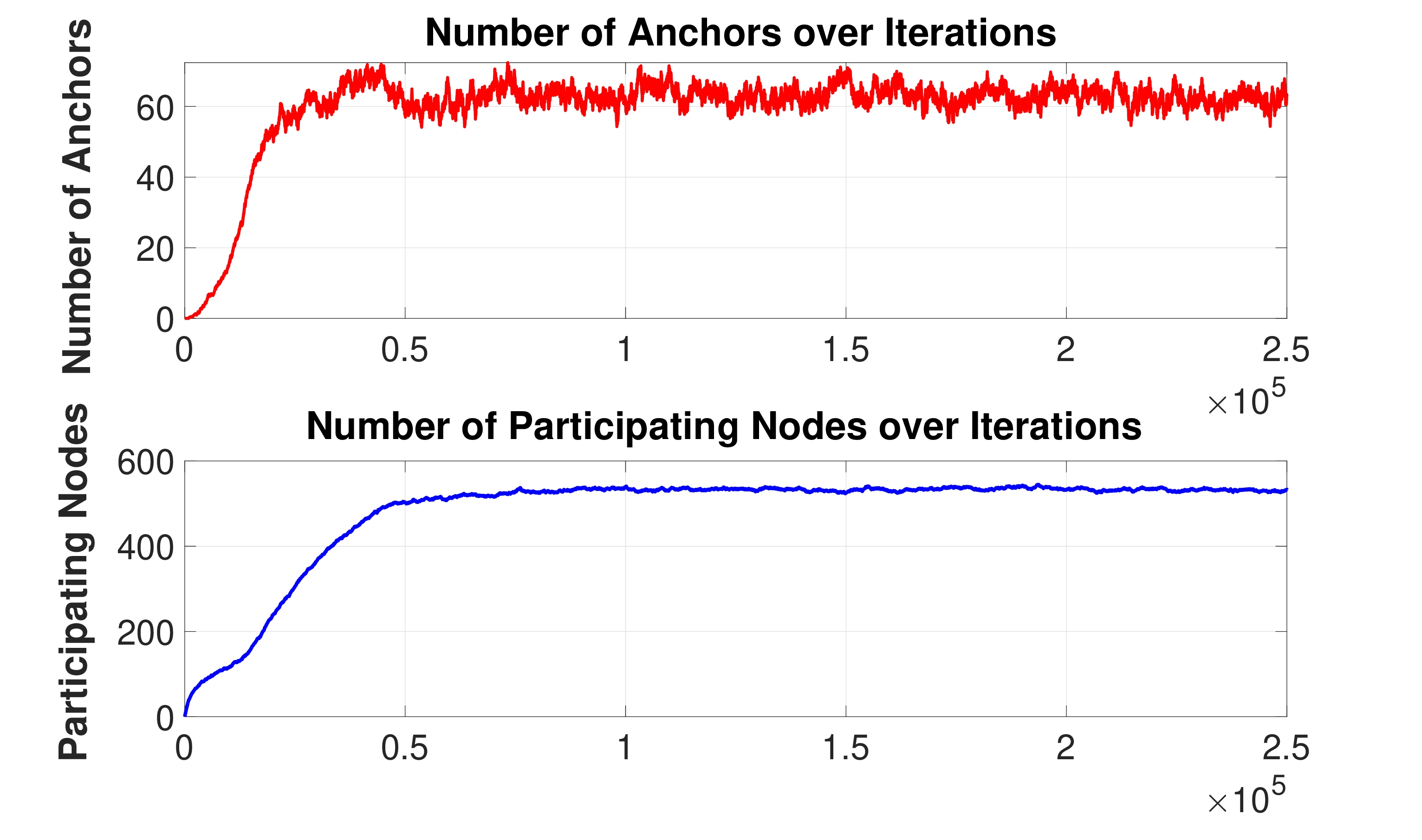}
        \label{fig:unconstrained_k=9}
        }  
        \subfigure[$k = 7$]{
    \includegraphics[width=0.32\textwidth]{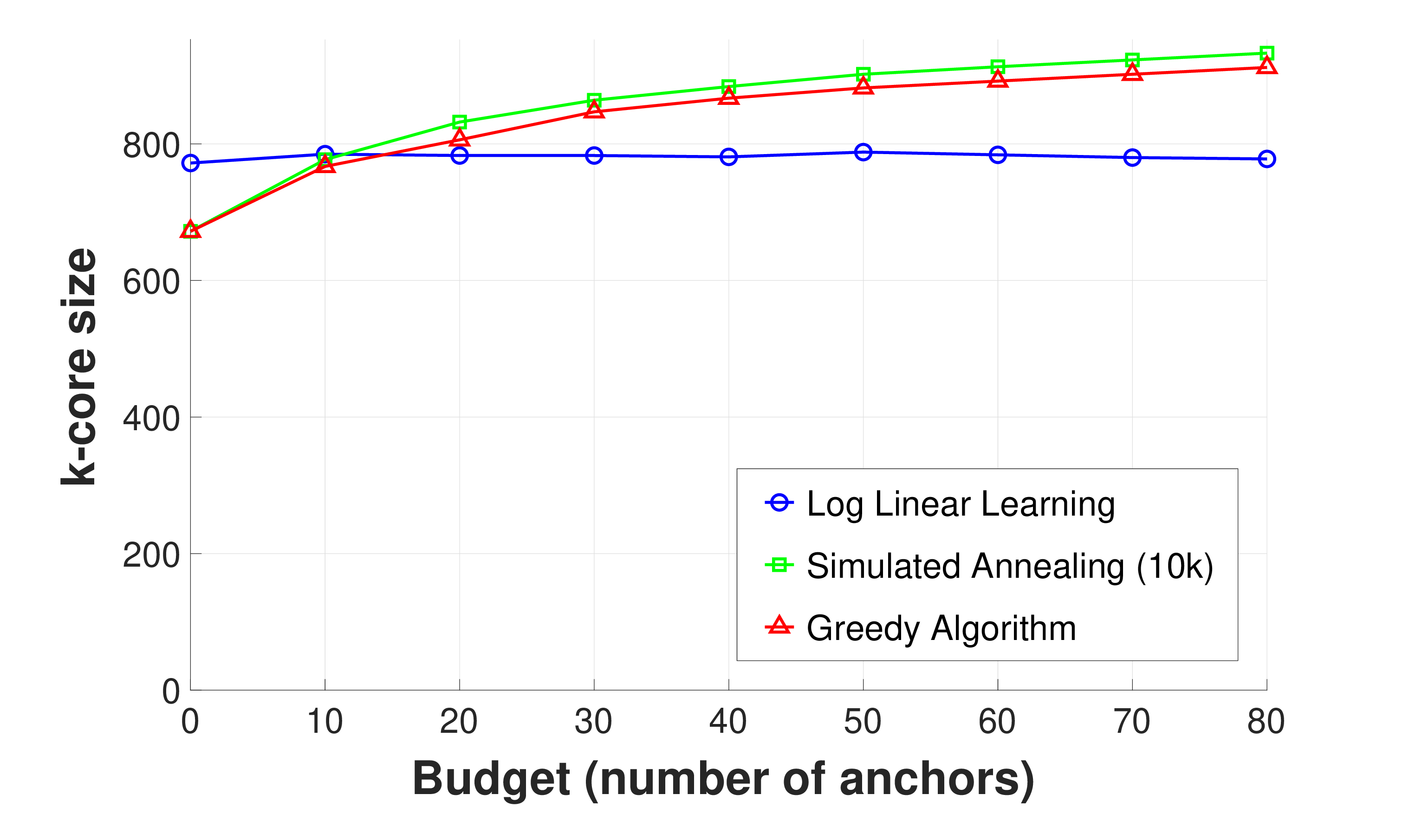}
    \label{fig:constrained_k=7}
    }
    \subfigure[$k = 8$]{\includegraphics[width=0.32\textwidth]{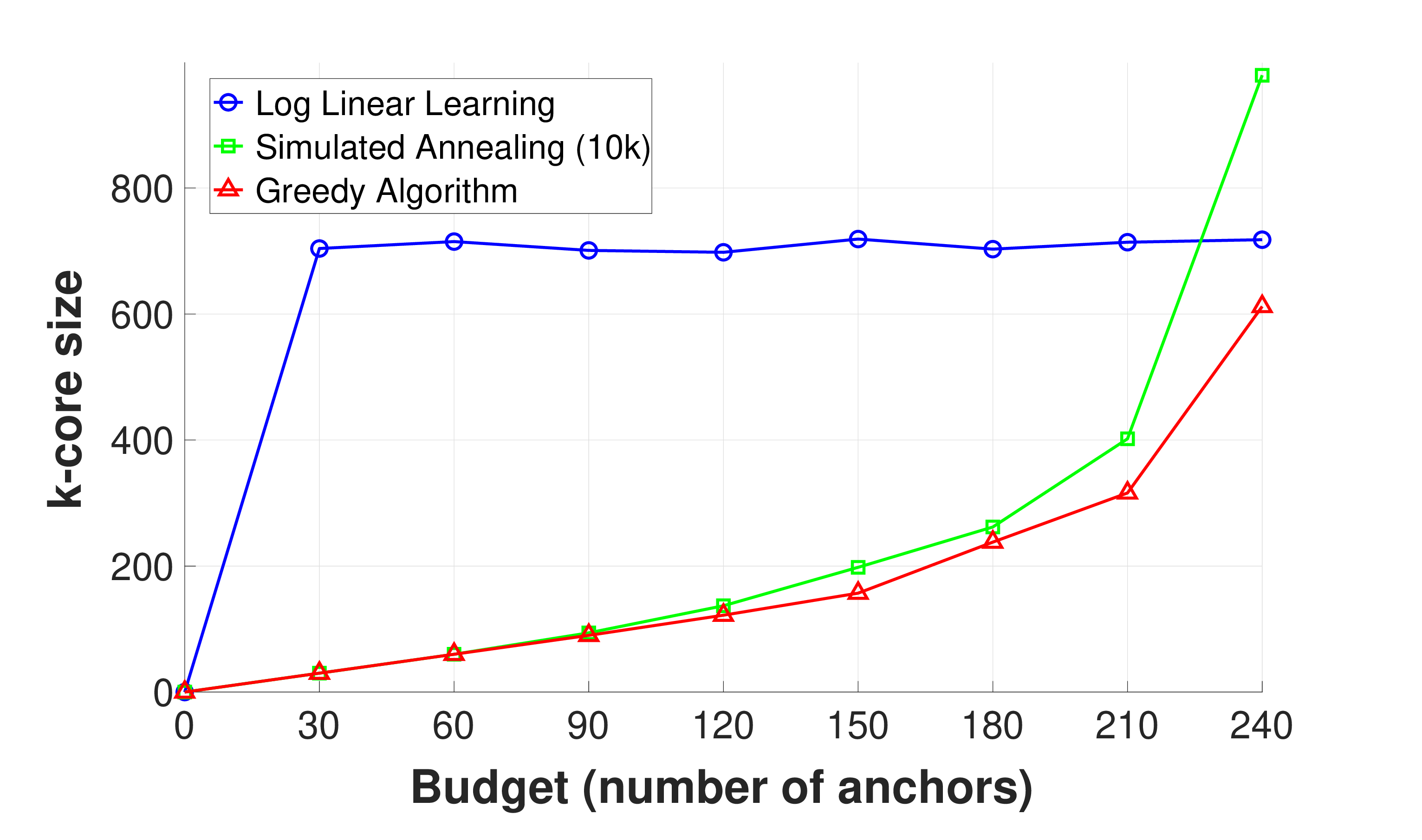}
    \label{fig:constrained_k=8}
    }
        \subfigure[$k = 9$]{\includegraphics[width=0.32\textwidth]{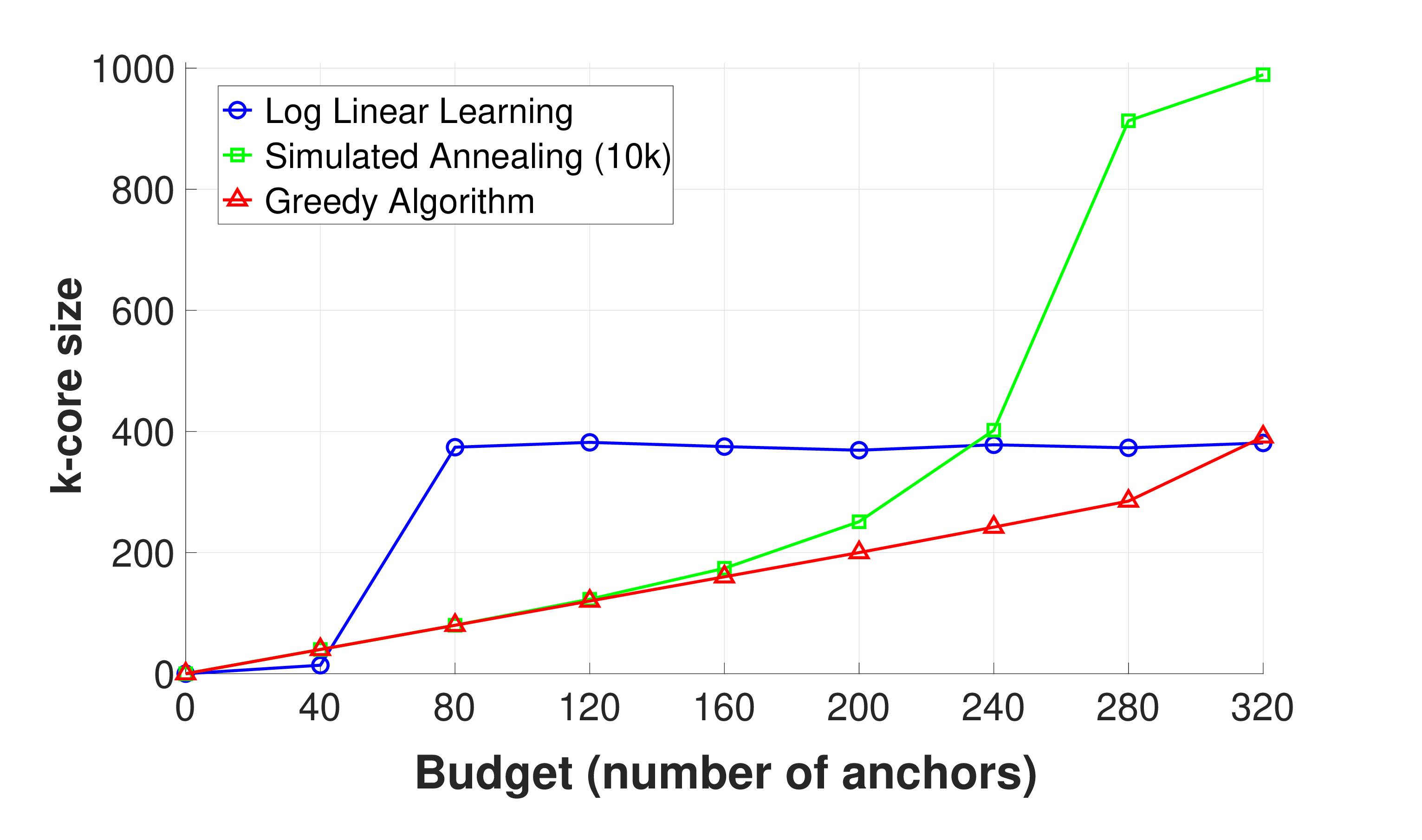}
    \label{fig:constrained_k=9}
    }
\caption{Simulation results for anchor selection strategy for network participation problem. The social network is an ER graph with $n = 1000$ and $p = 0.01$. The top plots show the total number of participating nodes and the bottom plots show the anchor nodes selected according to our proposed unconstrained framework.}
\label{fig:Unconstrained_NPG}
\end{figure*}
\vspace{-0.2in}
\subsection{Analysis}
We analyze the stochastic stability of the updated network participation and resource-sharing game setups with the Principal Agent. Our real-time anchor selection strategies for the network participation and resource-sharing problems are based on the potential game frameworks for the global optimization problems in (\ref{eq:opt_problem}) and (\ref{eq:opt_prob_NSG}). For potential games with utility function $U_i^{g,\te{NPG}}$ and $U_i^{g,\te{NSG}}$ derived using the Marginal Contribution Utility approach, we know that the stochastically stable network configurations are the ones that maximize the corresponding potential functions $\Phi^{\te{NPG}}$ and $\Phi^{\te{NSG}}$, respectively. 
However, the PA's actions updated the players' utility functions. Specifically, players with positive payoffs, i.e., $U_i^{\te{NPS}}(\sigma) > 0$ and $U^{\te{NSG}}_i(\sigma) > 0$, do not receive the additional credit for their contributions to their neighbors represented by $I_i^{\te{NPG}}$ and $I_i^{\te{NSG}}$. Consequently, the resulting games with utility functions $U_i^{\te{PA},\te{NPG}}$ and  $U_i^{\te{PA},\te{NSG}}$ are no longer exact potential games. 

\begin{prop}\label{prop:unconstrained_SS}
Consider the unconstrained versions of network participation and resource-sharing games with utility functions $U_i^{\te{PA},\te{NPG}}$ and  $U_i^{\te{PA},\te{NSG}}$ defined in Eqs. (\ref{eq:U_PA_NPG}) and (\ref{eq:U_PA_NSG}), respectively. 
\begin{enumerate}
    \item The games are best response potential games with potential functions $\Phi^{\te{NPG}}$ and $\Phi^{\te{NSG}}$ defined in Eqs. (\ref{eq:opt_problem}) and (\ref{eq:opt_prob_NSG}), respectively. 
    \item If all the players adhere to LLL, then the sets of stochastically stable configurations belong to the sets of Nash equilibria for these games. 
\end{enumerate}

\end{prop}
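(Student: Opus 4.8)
The plan is to reduce the second claim to the first: once the two games are shown to be best-response potential games for $\Phi^{\te{NPG}}$ and $\Phi^{\te{NSG}}$, the containment of the stochastically stable set in the Nash set follows immediately from Thm.~2 of \cite{alos10}, exactly as in the stochastic-stability proposition already proved for the network participation game. So the whole argument rests on verifying the best-response potential property of Eq.~(\ref{eq:BR_Potential}): for every player $i$ and every $\sigma_{-i}$, the best response under $U^{\te{PA},\te{NPG}}_i$ must coincide with the best response under $\Phi^{\te{NPG}}$ (and likewise for the sharing game). Since $\mc{A}_i = \{0,1\}$, I would carry this out by computing the two-action best response explicitly. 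The point worth emphasizing is that, as the text notes, these PA-modified games are \emph{not} exact potential games, so the weaker best-response notion is exactly what is needed.

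First I would record the two facts that make the PA rule ``transparent'' to the potential. From Theorem~\ref{thm:MCU}, the marginal-contribution utility satisfies $U^{g,\te{NPG}}_i(1,\sigma_{-i}) = \Phi^{\te{NPG}}(1,\sigma_{-i}) - \Phi^{\te{NPG}}(0,\sigma_{-i})$ while $U^{g,\te{NPG}}_i(0,\sigma_{-i}) = 0$, so the sign of $U^{g,\te{NPG}}_i(1,\sigma_{-i})$ is precisely the $\Phi^{\te{NPG}}$-preference between participating and not. Second, since $I^{\te{NPG}}_i \ge 0$ we have $U^{g,\te{NPG}}_i \ge U^{\te{NPG}}_i$, and by inspection of (\ref{eq:Util-k}) the value $U^{\te{NPG}}_i(1,\sigma_{-i})$ is never zero: it equals $\alpha>0$ when $|\mc{N}^p_i(\sigma)|\ge k$ and is strictly negative otherwise. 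With these two facts the case analysis is short. Substituting the PA decision rule (\ref{eq:PA_decision_unconstrained}) into (\ref{eq:U_PA_NPG}) gives $U^{\te{PA},\te{NPG}}_i(1,\sigma_{-i}) > 0$ if and only if $U^{g,\te{NPG}}_i(1,\sigma_{-i}) > 0$, and $U^{\te{PA},\te{NPG}}_i(1,\sigma_{-i}) < 0$ otherwise. In words: the PA switches on the incentive exactly in the regime $U^{\te{NPG}}_i \le 0 < U^{g,\te{NPG}}_i$ where the selfish and social preferences disagree, and this realigns the selfish best response with the $\Phi^{\te{NPG}}$-preference. The argument for the sharing game is identical after replacing $U^{\te{NPG}}_i, I^{\te{NPG}}_i, \Phi^{\te{NPG}}$ by $U^{\te{NSG}}_i, I^{\te{NSG}}_i, \Phi^{\te{NSG}}$; one checks that $U^{\te{NSG}}_i(1,\cdot)$ is likewise never zero and that $c_{ji}\ge 0$ keeps $I^{\te{NSG}}_i \ge 0$. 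Having matched the best-response correspondences, I would invoke Thm.~3.2 of \cite{voorneveld2000best} to conclude that all best-response paths are acyclic, hence both games are best-response potential games with the stated potentials, proving part~1; part~2 is then the cited Thm.~2 of \cite{alos10}.

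The one delicate point, and what I expect to be the main obstacle, is the boundary tie $U^{g,\te{NPG}}_i(1,\sigma_{-i}) = 0$. There $\Phi^{\te{NPG}}$ is indifferent between the two actions, so its $\argmax$ is $\{0,1\}$, whereas the modified utility strictly prefers dropout (because $U^{\te{NPG}}_i(1,\cdot)<0$ while the strict inequality in (\ref{eq:PA_decision_unconstrained}) withholds the incentive), giving $\argmax=\{0\}$ and a formal mismatch in (\ref{eq:BR_Potential}). I would dispose of this either by a genericity assumption on the $\alpha_j$ under which $U^{g,\te{NPG}}_i=0$ cannot occur, or, more robustly, by arguing acyclicity directly rather than through the exact $\argmax$ equality: $\Phi^{\te{NPG}}$ is nondecreasing along every best-response path and strictly increases except at such ties, and a player who is indifferent at the potential level is forced to play $0$ and cannot re-enter until a neighbor's move makes $U^{g,\te{NPG}}_i>0$ strictly, which precludes a zero-net-change cycle. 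Since Thm.~3.2 of \cite{voorneveld2000best} and Thm.~2 of \cite{alos10} only require acyclicity of best-response paths, this suffices to close both parts of the proposition.
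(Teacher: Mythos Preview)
Your approach is essentially the same as the paper's: verify the best-response condition~(\ref{eq:BR_Potential}) directly using the two-action structure, then invoke Thm.~2 of \cite{alos10} for part~2. You are considerably more thorough than the paper, which simply asserts that the argmax equality is ``straightforward to verify'' without giving the case analysis; in particular, the boundary tie $U^{g,\te{NPG}}_i=0$ you flag is not addressed there at all.
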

\begin{proof}

We know that a game is a best response potential game if the condition in (\ref{eq:BR_Potential}) is satisfied. Therefore, we must show that the following conditions are satisfied for all players $i$ and for all action profiles $\sigma \in \mc{A}$. 
\begin{align*}
    \argmax_{\sigma_i \in \mc{A}_i}U^{\te{PA},\te{NPG}}_i(\sigma) &=\argmax_{\sigma_i \in \mc{A}_i}\Phi^{\te{NPG}}_i(\sigma), \text{ and }\\
    \argmax_{\sigma_i \in \mc{A}_i}U^{\te{PA},\te{NSG}}_i(\sigma) &=\argmax_{\sigma_i \in \mc{A}_i}\Phi^{\te{NSG}}_i(\sigma).
\end{align*}
Since each player has only two actions, it is straightforward to verify that the above conditions hold. 

Statement 2) in the proposition is a direct consequence of the best response potential game nature of these games. 
\end{proof}
\subsection{Simulations}
We evaluate the performance of the proposed anchor selection strategies through extensive simulations. Moreover, we compare the performance of our proposed scheme with the anchor selection strategies presented in \cite{bhawalkar2015preventing} and \cite{abbas2019graph} for the participation and resource-sharing problems, respectively.

\begin{figure*}[t]
\centering
\subfigure[$(r,s) = (10,3)$, Network size = 723 nodes]{
\includegraphics[width=0.31\textwidth]{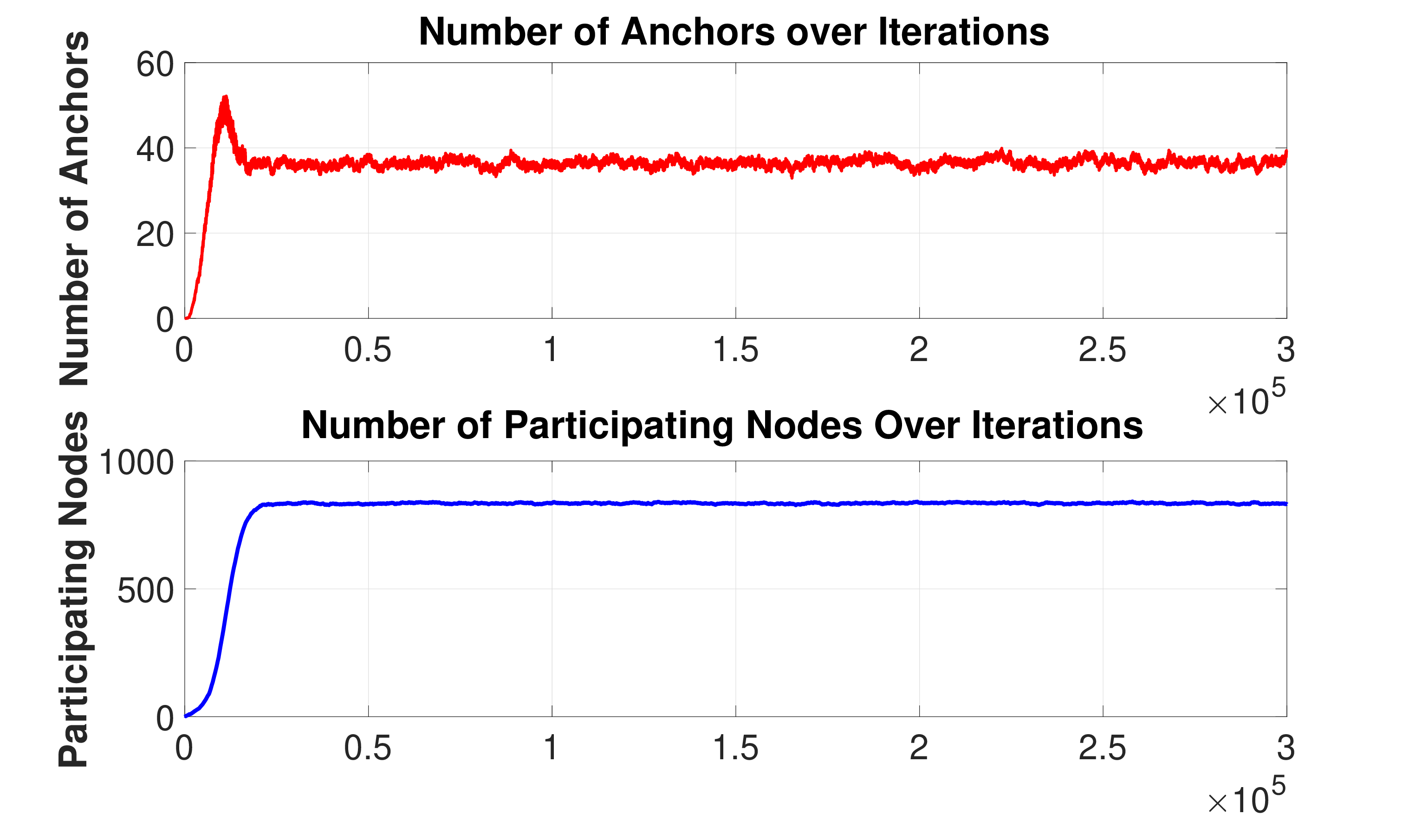}
\label{fig:unconstrained(r,s) = (10,3)}
}
    \subfigure[ $(r,s) = (6,2)$, Network size = 694 nodes]{
    \includegraphics[width=0.31\textwidth]{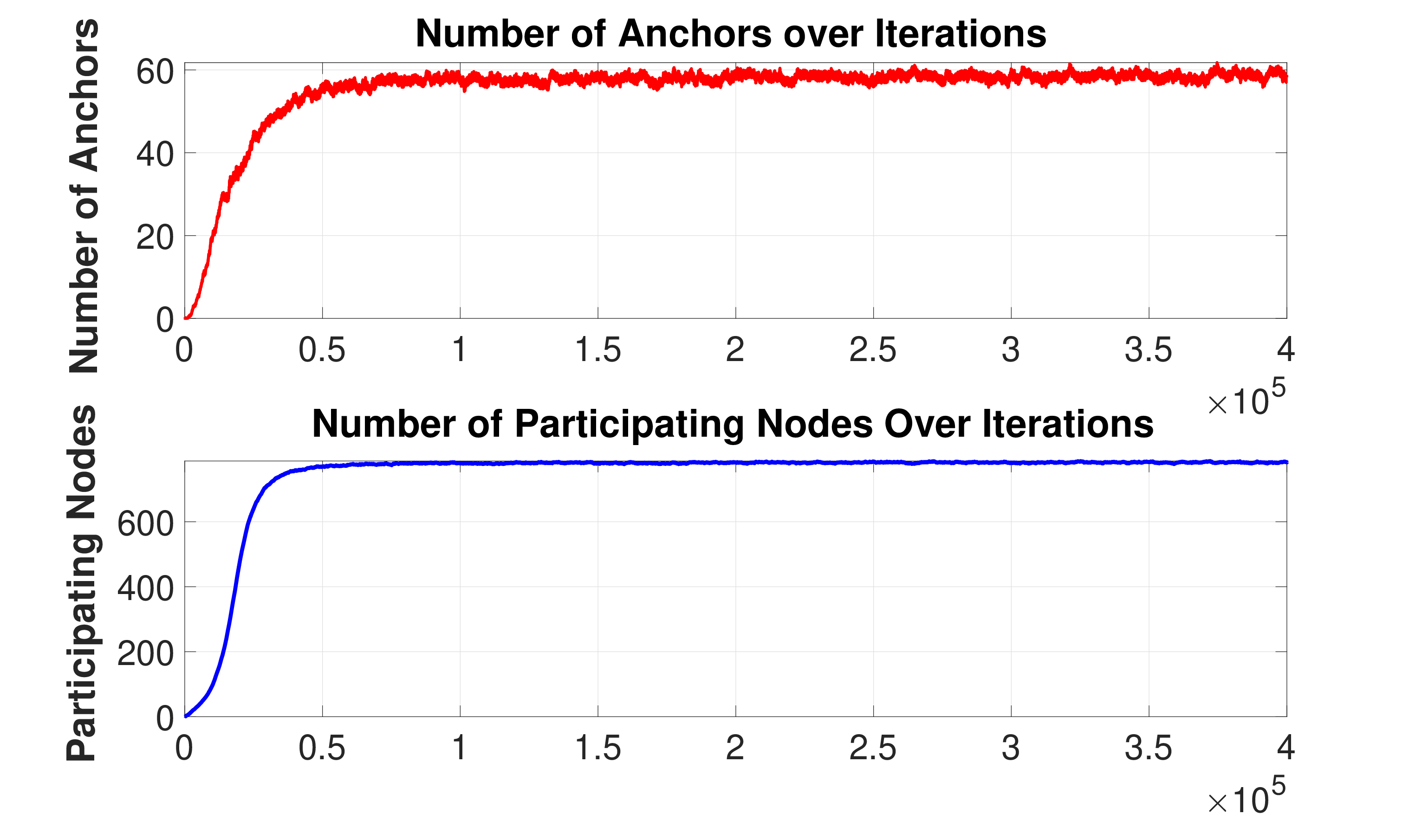}
    \label{fig:unconstrained(r,s) = (6,2)}
}
    \subfigure[ $(r,s) = (5,2)$, Network size = 742 nodes]{
    \includegraphics[width=0.31\textwidth]{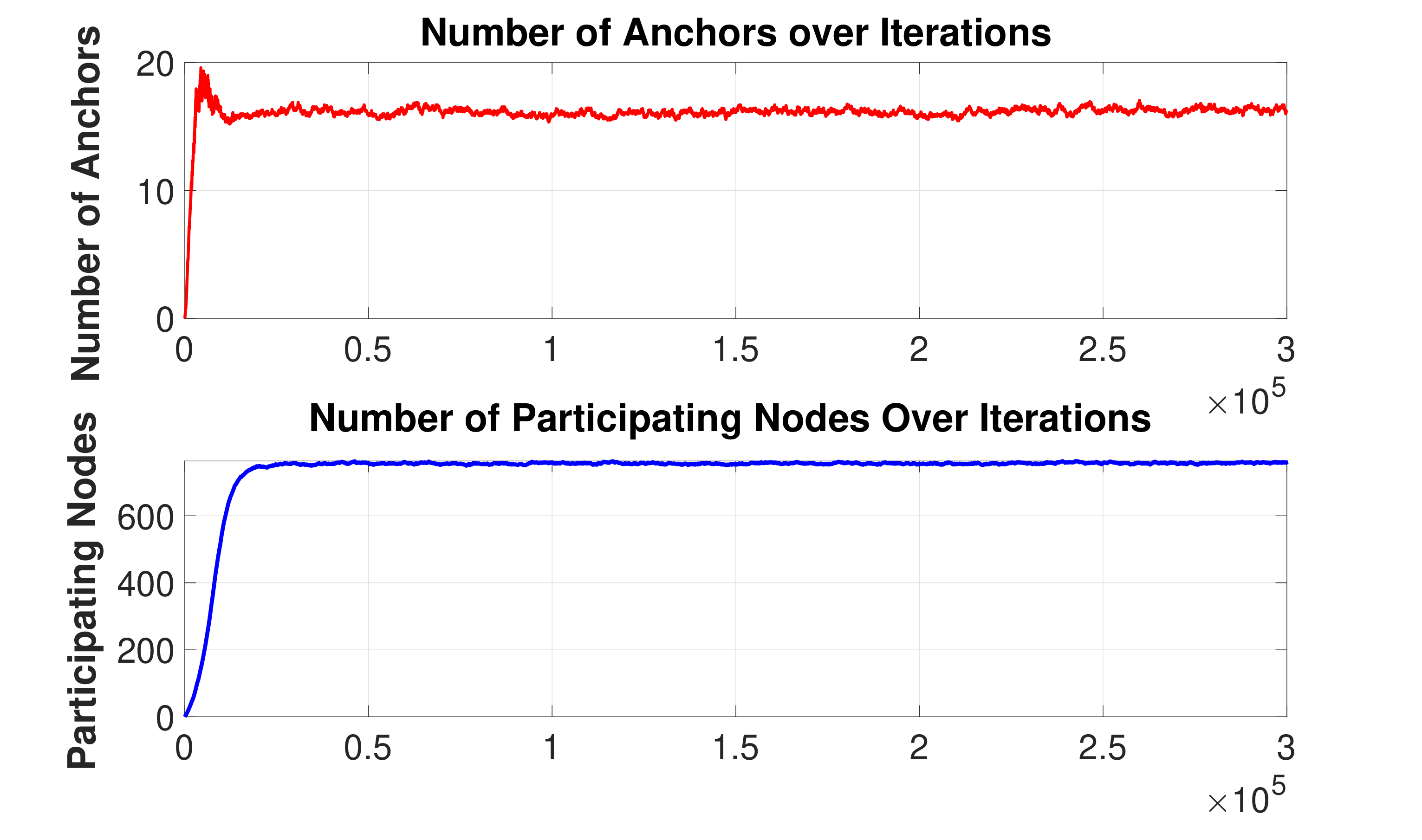}
    \label{fig:unconstrained(r,s) = (5,2)}
}
    \subfigure[$(r,s) = (10,3)$]
    {\includegraphics[width=0.31\textwidth]{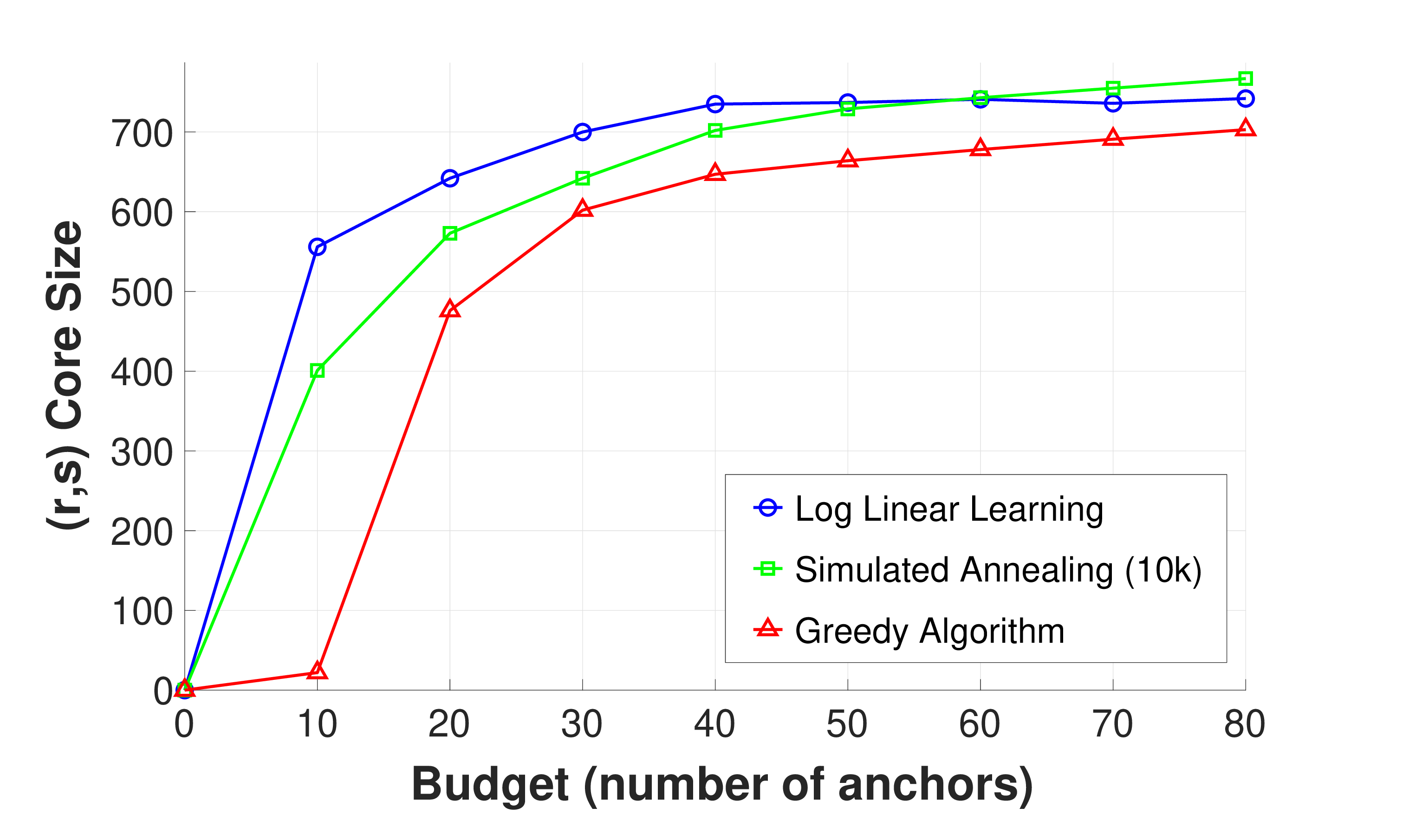}
    \label{fig:budget(r,s) = (10,3)}
    }
    \subfigure[$(r,s) = (6,2)$]{
    \includegraphics[width=0.31\textwidth]{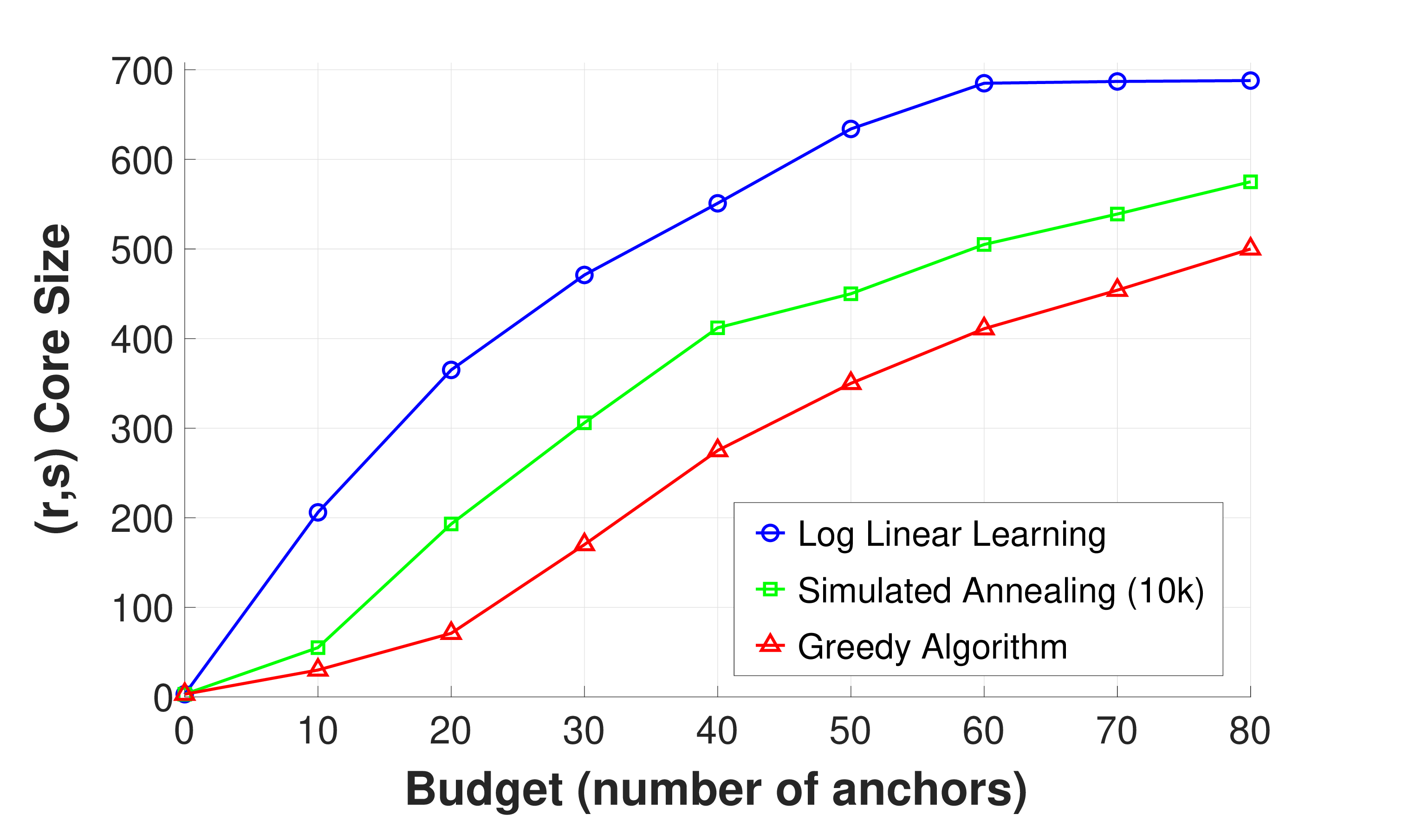}
    \label{fig:budget(r,s) = (6,2)}
    }
    \subfigure[$(r,s) = (5,2)$]{
    \includegraphics[width=0.31\textwidth]{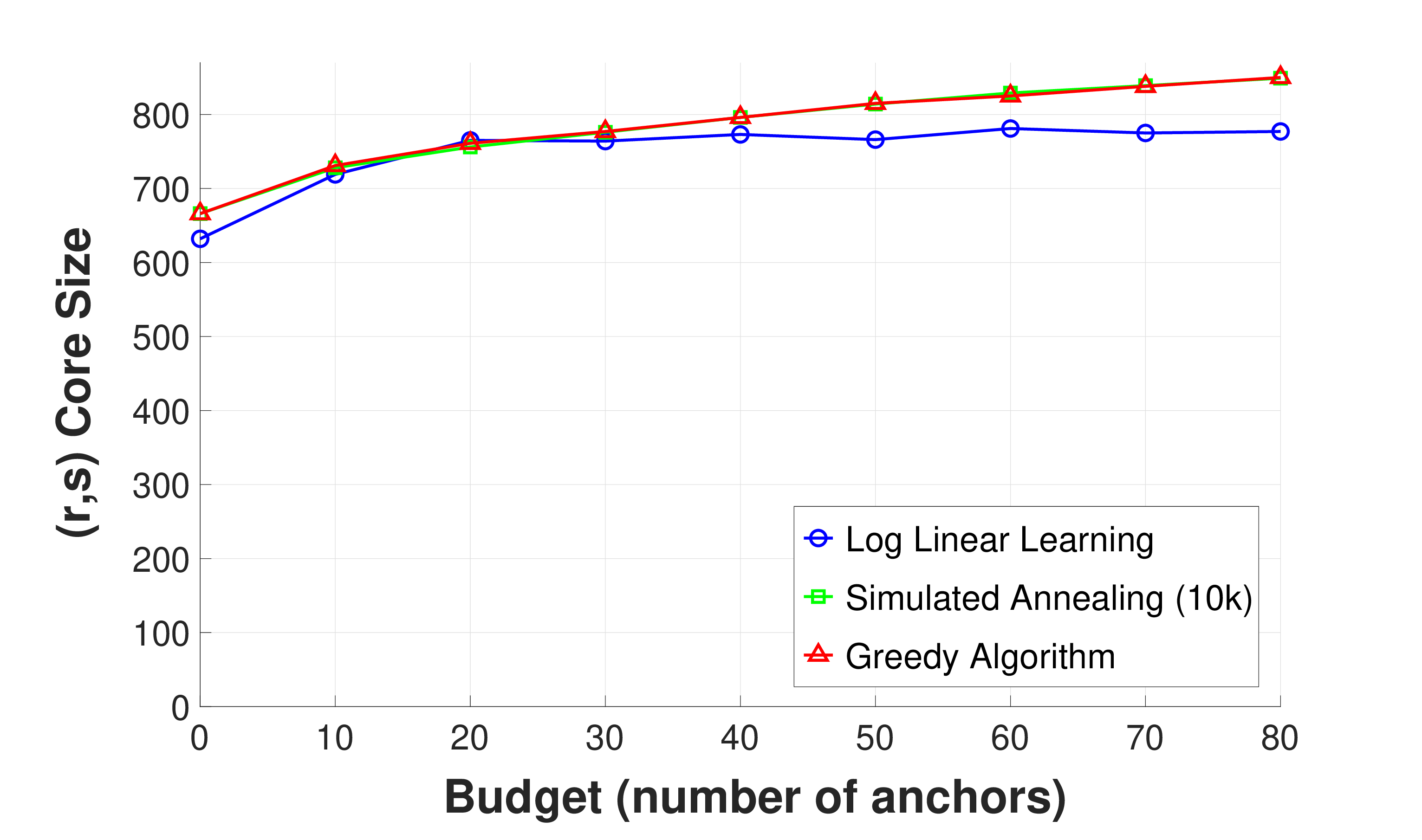}
    \label{fig:budget(r,s) = (5,2)}
    }    
\caption{Simulation results for the anchor selection strategy for the resource sharing game. The social network is an ER graph with $n = 1000$, $p = 0.01.$ The top plots in Figs (a)-(c) represent the number of anchor nodes, and the bottom plots represent the number of participating nodes for the unconstrained case. The plots in Figs. (d)-(f) represents a comparison of the $(r,s)$-core sizes under our proposed approach with the existing approaches. }
\label{fig:anchor_rs}   
\end{figure*}

We begin by analyzing the network participation game without constraints, as shown in Figs. \ref{fig:unconstrained_k=7}-\ref{fig:unconstrained_k=9}. The top plots display the number of anchor nodes, while the bottom plots show the evolution of player participation. The simulation setup follows Section \ref{sec:Simulation_local} using an ER graph with 1000 players and a link probability of $p = 0.01$, yielding an average of 10 neighbors per player. Players are considered participants if they remain active for more than 90\% of the final 100,000 iterations. For the budget-constrained scenario, the update interval is $t^u = 100$ with a minimum anchor duration of $t^{\text{th}} = 1000$ intervals. In Fig. \ref{fig:LLL_kcore}, it was shown that for $k \in \{5,6\}$, the $k$-core includes nearly the entire network. For $k = 7$, around 80\% of players are in the $k$-core, but for $k \geq 8$, the $k$-core is empty with zero initial participation. Under random initial conditions, 40\% of players were part of the $k$-core for $k = 8$, while the $k$-core remained empty for $k = 9$. Therefore, we focus on $k \in \{7,8,9\}$ to evaluate the effect of anchor nodes with zero initial participation.

The results in the bottom plots of Figs. \ref{fig:unconstrained_k=7}-\ref{fig:unconstrained_k=9} show that network participation improves markedly by adding a few anchor nodes. For $k = 9$, the network size increased from zero to around 350 players with just 60 anchor nodes. For $k = 8$, the network size increased from zero in the case of no initial participants and 270 in the case of a random number of initial participants to over 700 players. Participation also increased for $k = 7$, although the $k$-core size was already significant in this case. We want to highlight here that the number of participating players also includes the anchors.

The top plots in Figs. \ref{fig:unconstrained_k=7}-\ref{fig:unconstrained_k=9} provide interesting insights into the dynamics of our anchor selection strategy. Focusing on the top plot in Fig. \ref{fig:unconstrained_k=8}, we observe that the number of anchors increased rapidly in the initial phase, peaking at around eighty nodes, before gradually decreasing and stabilizing at a steady-state value of approximately twenty. One would expect this behavior when designing an advertising campaign for a new product, which, in our case, is a collaborative activity or an online community. Initially, with no users, the campaign requires aggressive incentives to attract participation. During this phase, the campaign designer must offer generous incentives to bring in as many users as possible. However, once the product gains traction within the community, the incentives can be allocated more strategically, allowing the designer to reduce the budget. Finally, as a significant portion of the community becomes engaged, the campaign can focus on a smaller, more targeted set of users.

We can observe a similar behavior in Fig. \ref{fig:unconstrained_k=7} for $k = 7$, where after an initial peak of eighty, the number of anchors converged to a value of less than ten. However, for $k = 9$, the behavior is different. The number of anchors steadily increased and settled at 60. This behavior is also justified because the threshold of $k =9$ is strict for a network with an average number of ten neighbors. However, the gains from this incentivization are also significant, increasing network size from zero to over 300 players. 

Figures \ref{fig:constrained_k=7}-\ref{fig:constrained_k=9} show the results of the budget-constrained participation game, where the participation network size is plotted against the anchor node budget, $b$. We compare the network size under our proposed framework to the anchored $k$-core generated using the greedy and simulated annealing algorithms from \cite{abbas2019graph}. \emph{In our approach, the final number of anchors in the unconstrained case establishes an upper bound on the number of anchors that can contribute to the constrained scenario}. Consequently, for $k = 8$, a rapid increase in network size occurs as the budget $b$ rises from zero to ten, but the network size stabilizes and remains constant with further increases in $b$. Similarly, for $k = 9$, the network size increases sharply when $b$ approaches 60, which is the number of anchors in the unconstrained case. However, the network size remains steady for $b > 60$. Thus, our approach cannot utilize the additional budget, and the number of anchor nodes selected in the constrained scenario does not increase beyond those in the unconstrained scenario. One possible solution to this limitation is to switch to a different selection strategy once the limit on the number of anchors is achieved. 

On the other hand, under the greedy and simulated annealing algorithms, the number of anchor nodes continues to increase as the budget rises. The plots indicate that in these algorithms, adding anchor nodes does not incentivize additional player participation for $b < 180$, leading to a participation network consisting solely of anchors. As we keep increasing the budget, the size of the participation network approaches that of the entire network under simulated annealing. To explain this behavior, assume that we have a sufficiently large budget for anchors. An optimal anchor selection strategy would be to identify players with less than $k$ neighbors and select them as anchors. If all such players are incentivized to participate as anchors, we will have a fully participating network. This is precisely what is happening in the case of simulated annealing. While full network participation appears attractive, the budget $b$ required to achieve this state is impractical. \emph{The results under our proposed framework appear closer to reality, in which we can gain a significant improvement in network participation by introducing a reasonable number of anchors. }

In the resource-sharing game, the simulation results are illustrated in Fig. \ref{fig:anchor_rs}. Figures \ref{fig:unconstrained(r,s) = (10,3)}-\ref{fig:unconstrained(r,s) = (5,2)} display the outcomes for the unconstrained scenario, while Figures \ref{fig:budget(r,s) = (10,3)}-\ref{fig:budget(r,s) = (5,2)} present results for the budget-constrained scenario. For the constrained case, we compare the size of the $(r,s)$-core obtained using our proposed approach to those derived from the greedy and simulated annealing algorithms in \cite{abbas2019graph}. As shown in Figures \ref{fig:budget(r,s) = (10,3)}-\ref{fig:budget(r,s) = (5,2)}, the $(r,s)$-core is empty without anchors in the $(10,3)$ and $(6,2)$ cases but includes approximately 750 players in the $(5,2)$ case. Furthermore, from Figures \ref{fig:unconstrained(r,s) = (10,3)} and \ref{fig:unconstrained(r,s) = (6,2)}, we observe that incentivizing forty and fifty players as anchors increases the participating network size from zero to around 700 players.

For the budgeted anchor selection problem, we compared our results with standard anchor selection methods, including simulated annealing and a greedy algorithm presented in \cite{abbas2019graph}. The plots in Figs. \ref{fig:budget(r,s) = (10,3)}-\ref{fig:budget(r,s) = (5,2)} demonstrate the increase in $(r,s)$-core size as the number of anchor nodes increases. These plots indicate that the expansion of the sharing network, when anchor nodes are selected using our real-time strategy, is comparable to that achieved with greedy and simulated annealing-based heuristics. Our results establish that the proposed approach performs on par with existing methods in terms of network size, while offering greater flexibility in anchor node selection and improved robustness to node failures.
  
Thus, from the simulation results presented in this section, we can conclude that the proposed anchor selection framework can effectively identify and incentivize players to maximize user engagement in network participation and resource-sharing scenarios.

\section{Conclusions}\label{sec:conclusion}
We proposed a game-theoretic formulation for network engagement problems in the context of network participation and resource-sharing problems.
\begin{itemize}
    \item We modeled the users in a social network as boundedly rational decision-makers, and we established that this behavioral model was instrumental in modeling the diffusion and non-diffusion aspects of network evolution proposed in multiple empirical studies of social networks. 
    \item We illustrated that our proposed framework enhanced the quality and rigor of steady-state analysis of the network engagement problem. We completely characterized the set of Nash equilibria for both the games. We proved that the network configuration will converge to the set of Nash equilibria if all the players adhere to LLL. Moreover,  we presented a stochastic stability analysis of the network participation problem for important classes of networks using the Radius-CoRadius results. 
    \item We presented a real-time approach for maximizing user engagement in network participation and resource-sharing problems. We proved that the games that were designed with a Principal-Agent belonged to the class of best response potential games.
    \item We verified the performance of our proposed framework through extensive simulations. In all the scenarios, we showed that our performance was comparable with the results in the existing literature. However, our framework offers significant enhancement in terms of the depth of system analysis. 
\end{itemize}
%
%
\bibliography{IEEEabrv,Robustness}
\end{document}